\newtheorem{proposition}{Proposition}
\def\A{\mathbf{A}}
\def\C{\mathbf{C}}
\def\I{\mathbf{I}}
\def\P{\mathbf{P}}
\def\Q{\mathbf{Q}}
\def\U{\mathbf{U}}
\def\a{\mathbf{a}}
\def\c{\mathbf{c}}
\def\d{\mathbf{d}}
\def\f{\mathbf{f}}
\def\g{\mathbf{g}}
\def\h{\mathbf{h}}
\def\q{\mathbf{q}}
\def\u{\mathbf{u}}
\def\w{\mathbf{w}}
\def\x{\mathbf{x}}
\def\y{\mathbf{y}}
\def\z{\mathbf{z}}
\def\aalpha{\boldsymbol{\alpha}}
\def\bbeta{\boldsymbol{\beta}}
\def\llambda{\boldsymbol{\lambda}}
\def\eeta{\boldsymbol{\eta}}
\def\LLambda{\boldsymbol{\Lambda}}
\def\TTheta{\boldsymbol{\Theta}}
\begin{document}
\pdfoutput=1
\title{Wide-Beam Array Antenna Power Gain Maximization via ADMM Framework}

\author{Shiwen Lei,~\IEEEmembership{Member,~IEEE,}
        Jing Tian,~\IEEEmembership{Member,~IEEE,}
        Zhipeng Lin,~\IEEEmembership{Student Member,~IEEE,}
        Haoquan Hu,~
        Bo Chen,~
        Wei Yang,~
        Pu Tang,~
        and~Xiangdong Qiu

\thanks{S. Lei,~J. Tian,~Z. Lin, ~H. Hao, ~B. Chen, ~W. Yang and P. Tang are with the School of Electronic Science and Engineering, University of Electronic Science and Technology of China, Chengdu, Sichuan 611731, China. X. Qiu is with the Global Vision Media Technology Co., Ltd., Beijing, 100040, China}
\thanks{This work was partly supported by the National Natural Science Foundation of China under Grants U20B2043, 61801090 and 62001095.}
\thanks{Corresponding author: J. Tian (email: j.tian@uestc.edu.cn).}
}

\maketitle

\begin{abstract}
This paper proposes two algorithms to maximize the minimum array power gain in a wide-beam mainlobe by solving the power gain pattern synthesis (PGPS) problem with and without sidelobe constraints.
Firstly, the nonconvex PGPS problem is transformed into a nonconvex linear inequality optimization problem and then converted to an augmented Lagrangian problem by introducing auxiliary variables via the Alternating Direction Method of Multipliers (ADMM) framework.
Next, the original intractable problem is converted into a series of nonconvex and convex subproblems.
The nonconvex subproblems are solved by dividing their solution space into a finite set of smaller ones, in which the solution would be obtained pseudo-analytically.
In such a way, the proposed algorithms are superior to the existing PGPS-based ones as their convergence can be theoretically guaranteed with a lower computational burden.
Numerical examples with both isotropic element pattern (IEP) and active element pattern (AEP) arrays are simulated to show the effectiveness and superiority of the proposed algorithms by comparing with the related existing algorithms.
\end{abstract}

\begin{IEEEkeywords}
Alternating direction method of multipliers (ADMM), array antenna, pattern synthesis, power gain, wide-beam.
\end{IEEEkeywords}
\IEEEpeerreviewmaketitle

\section{Introduction}

\IEEEPARstart{A}{rray} antenna has been widely applied in many modern electronic systems including radar, wireless communications (5G/satellite), sonar, remote sensing, etc., for its flexibility of synthesizing the desired radiation pattern \cite{LagerTSL_2009, DaiWLCW_2013, WangZHJ_2018, IupikovISCKC_2018}.
Basically, the array radiation patterns are divided into two categories including pencil beam pattern (PBP) \cite{YangZL_2013, ZhangHLZCL_2017, ButtazzoniV_2019, LeiHCTTQ_2019, Kedar_2020} and shaped beam pattern (SBP) \cite{ZhangHLZP_2018, LiLG_2018, ZhangHLYZZ_2019,RoccaPPM_2020, LiangFCZ_2020}.
The PBP is usually employed in target detection, direction of arrival estimation, point-to-point communication, etc., while the SBP is usually applied in regional coverage, wide-area searching, remote sensing and so on.
This paper focuses on the problem of maximizing the minimum power gain in a wide-beam mainlobe, which belongs to the type of SBP application.
Power gain in a given wide-beam mainlobe is usually expected to be high enough to guarantee the quality of signal receiving \cite{ChoiLL_2013, LimaMCFF_2015}.
One typical application utilizing such high gain wide-beam array is the satellite communication in motion, where the antenna is installed on a mobile platform such as train, airplane or other vehicle.
It, therefore, requires a wide-beam mainlobe with high power gain to guarantee the effective receiving of satellite signal during the moving of platform.


To synthesize a wide-beam with array antenna, the conventional methods are mainly SBPS-based, which try to narrow the mainlobe ripple level (MRL) with given the sidelobe level (SLL) or minimize the SLL with given the MRL.
For example, the SBPS problem is relaxed as a convex one, which therefore can obtain the optimal excitation weight by  solving a series of convex subproblems \cite{FuchsSM_2013, ZhangJZ_2020, QiBZ_2020}.
In addition, the alternating direction multiplier method (ADMM) framework is also utilized to simplify the SBPS optimization problem by updating the primal and dual variables iteratively \cite{YanYYZG_2017, LiangZSZ_2018, LiuJL_2020}.
Moreover, other methods such as semi-definite Relaxation (SDR)\cite{FuchsR_2016}, unitary matrix pencil (UMP)  \cite{ShenWL_2017, ZhuLLLL_2017} and pseudo-analytical approaches \cite{LiQZ_2017, QiL_2018}, etc., are applied to synthesize the desired SBP as well in past few years.
While these work try to obtain the SBP with a lower SLL \cite{GemechuCYK_2020}, or/and a narrower ripple \cite{FanLYCL_2019}, or/and a smaller excitation dynamic range ratio  \cite{LeiHTCTYQ_2019, FanLZSZ_2019},  the synthesis results cannot guarantee the highest power gain in a wide-beam mainlobe due to these SBPS-based methods focus on the pattern shape optimization.

To obtain the highest power gain in a wide-beam pattern, the intuitive idea is to select appropriate array excitation for power gain pattern optimization directly, in which the power gain pattern is set as cost function during the optimization. Such scenario is marked as power gain pattern synthesis (PGPS) optimization.
In our previous works, the PGPS problem is solved with an iterative algorithm \cite{LeiYHZCQ_2019} and a linear programming algorithm \cite{LeiHCTPX_2020}, respectively.
The numerical examples presented in these work validate that the PGPS-based algorithms are able to obtain higher minimum power gain in a wide-beam mainlobe than the SBPS-based algorithms.
It is also demonstrated that the PGPS-based iterative algorithm can obtain higher power gain than the linear programming one, while the latter one is SLL-controllable with a slight power gain degeneration.
These algorithms are further improved with a two-stage optimization algorithm \cite{ZhangJZ_2020e}, in which
the array excitation of current iteration  is updated inner an imperceptible variation of the previous iteration with a refined initialized guess.
Although these PGPS-based algorithms convert the original problem into (approximately) equivalent convex optimization that can be solved effectively with the readable tools such as CVX \cite{cvx} and SeDumi\cite{SeDumi}, the convergence of these PGPS-based algorithms cannot be theoretically guaranteed as their optimal solutions highly depend on the initialization parameters \cite{YangYHZD_2020}.
Though \cite{ZhangJZ_2020e} provides two well initialization guesses to obtain a higher power gain than the existing SBPS-based algorithms, its convergence still cannot be theoretically.
Besides, the computational burden of these PGPS-based algorithms would be heavy when the array size becomes large.
To overcome this shortcomings, this paper proposes two new algorithms that consider the scenarios without and with the SLL constraints to maximize the minimum power gain in a wide-beam mainlobe based on the ADMM framework.

The nonconvex PGPS problem is firstly reformulated into an augmented Lagrangian form by introducing auxiliary variables.
Then the reformulated problem is divided into a series of subproblems that can be solved iteratively to update the primal, dual and penalty variables.
It is worthy to mention that the primal-variables-related subproblems are intractable since these subproblems are still nonconvex.
In our algorithms, the solution space of the primal variables are divided into a finite set of smaller spaces, in which their solution can be obtained analytically.
The main contribution of this paper can be remarked as follows:
1). The nonconvex problem is reformulated into an augmented Lagrangian form to simplify the original problem. As a result, the intermediate variables can be updated separately.
2). The intractable subproblems are further simplified through dividing their solution spaces into finite smaller ones, in which the suboptimal solutions can be obtained analytically. Hence, the optimal solution can be achieved by minimizing the related cost function. Therefore, the nonconvex subproblems are pseudo-analytically solvable.
3). Based on the ADMM framework, the proposed algorithms are proved to be theoretically convergent.
As a sequel, the proposed algorithms can always obtain the suboptimal solution.

The remainders of this paper are organized as follows: the PGPS problem without/with the SLL constraints is formulated in Sec. II. The related proposed algorithms are deduced in Sec. III. The numerical examples are carried out and the remarks are discussed in Sec. IV while the conclusions are drawn in Sec. V.

\textit{Notations}: The superscripts $'H'$ and $'T'$ represents the Hermitian conjugate and conjugate of vector/matrix.
The superscript $'(k)'$ represents the $k^{th}$ iteration.
The boldface capital letter, such as $\A$, $\P$ and the boldface lower case letter, such as $\c$, $\g$, represent matrix and vector, respectively.
$'\odot'$ represents the element-wise multiplication operator.
$'\angle\x'$ represents the angular vector of vector $\x$.
$'||\x||_1'$ and $'||\x||_2'$ represent the $\ell_1$ norm and $\ell_2$ norm of vector $\x$, respectively.

\section{Problem Formulation}
For simplicity, a linear array is discussed in this work and the related results can easily be extended to the planar array.
For any $N$-elements arbitrarily distributed linear array with the element position $r_n~(n=1, \cdots, N)$, the array far-field electric field at $\theta$ direction can be expressed as the sum of those of elements:
\begin{align}
E(\theta)=\sum_{n=1}^N\omega_n\exp(2j\pi\kappa r_n\sin(\theta)\triangleq\w^H\a
\end{align}
where $\w=[\omega_1, \cdots, \omega_N]^T$, $\a=[a_1(\theta), \cdots, a_N(\theta)]^T$ are the array weight and the array factor with $a_n(\theta)=\exp(2j\pi\kappa r_n\sin(\theta))$, $\theta\in[-\frac{\pi}{2}, \frac{\pi}{2}]$ and $\kappa=\frac{\lambda}{2\pi}$ being the wavenumber.
The effective radiation beam pattern considering the element total efficiency can be expressed as \cite{LeiHCTPX_2020}:
\begin{align}
P(\theta)=(\w\odot\sqrt{\eeta})^H\a(\theta)
\end{align}
where $\eeta=[\eta_1, \cdots, \eta_N]^T$ is the element total efficiency including the reflection, conduction and dielectric efficiencies \cite{balanis_2016}.
In practice, the total efficiency can be affected by both beam direction and array weight. Since the relationship among the beam direction, the array weight and the total efficiency $\eeta$ can be determined experimentally, $\eeta$ is assumed as known $\textit{a priori}$ in this paper.
To simplify the expression, redefine $\w=\w\odot\eta$ and the array power gain can be written as
\begin{align}
G(\theta)=\frac{4\pi|P(\theta)|^2}{2\pi\oint|P(\theta)|^2\cos(\theta)d\theta}=\frac{2\w^H\A_\theta\w}{\w^H\A\w}
\end{align}
where $\A_{\theta}=\a^H(\theta)\a(\theta)$ and $\A=\oint\a^H(\theta)\a(\theta)\cos(\theta)d\theta$.
As a result, $\A_{\theta}$ and $\A$ are both positive semi-definite matrices.
Given the array structure, parameters $\A_{\theta}$ and $\A$, can be computed accordingly.

To generate a wide-beam mainlobe with optimal power gain through optimizing the array weight $\w$, we need to solve the optimization problem in the following forms:
\begin{itemize}
\item [1).]without SLL constraint:
\begin{align} \label{eq:PGPS_without_SLL}
\begin{split}
&~~~~~~~\max_{\w} ~G_0\\
&s.t. ~~~\frac{\w^H\A_{\theta}\w}{\w^H\A\w} \geq G_0, ~~\theta\in\TTheta_\text{ML}\\
\end{split}
\end{align}
\item [2).] with SLL constraint:
\begin{align}\label{eq:PGPS_with_SLL}
\begin{split}
&~~~~~~~\max_{\w} ~G_0\\
&s.t.~~~ \frac{\w^H\A_{\theta}\w}{\w^H\A\w} \geq G_0,~~~\theta\in\TTheta_\text{ML}\\
&~~~~~~~ \frac{\w^H\A_{\theta}\w}{\w^H\A\w} \leq \gamma G_0,~~\theta\in\TTheta_\text{SL}
\end{split}
\end{align}
\end{itemize}
where $\TTheta_\text{ML}$ and $\TTheta_\text{SL}$ are the wide-beam mainlobe region and sidelobe region, respectively.
$\gamma$ is the desired peak SLL.

\section{Proposed Algorithms for Power Gain Maximization}
In this section, two effective algorithms for solving problem (\ref{eq:PGPS_without_SLL}) and (\ref{eq:PGPS_with_SLL}) are proposed based on the ADMM framework with a guaranteed convergence property.
In physics, $\w^H\A\w$ denotes the transmission or reception total power of the electromagnetic wave, which is greater than zero for any $\w\neq0$, i.e., $\w^H\A\w>0$. As a result, $\A$ is a positive definite matrix and there exists a positive definite matrix $\C$ so that $\A=\C^H\C$.
Rewrite the mainlobe region $\TTheta_\text{ML}$ as discrete $\text{L}_\text{ML}$ angular and denote $\x=\C\w$, $\P_l=\C^{-H}\A(\theta_l)\C^{-1}$, $l=1, \cdots, \text{L}_\text{ML}$ with $\theta_l\in\TTheta_{\text{L}_\text{ML}}$, $\w^H\A(\theta_l)\w$ can be converted into the $\ell_2$ norm of vector $\a^H(\theta_l)\C^{-1}\x$, i.e., $\w^H\A(\theta_l)\w=\x^H\P_l\x=||\a^H(\theta_l)\C^{-1}\x||^2_2$.


\subsection{Power Gain Maximization Without SLL Constraint}
The objective function in (\ref{eq:PGPS_without_SLL}) (and (\ref{eq:PGPS_with_SLL})) can be converted into following form:
\begin{align}
\max_{\w}~G_0~\Leftrightarrow~\min_{\w}~-G_0
\end{align}

Since the zooming of $\x$ does not affect the final solution to the problem, $||\x||_2$ can be arbitrary real number.
Without loss of generality, set $||\x||_2=1$.
Problem (\ref{eq:PGPS_without_SLL}) can be rewritten as
\begin{align} \label{eq:PGPS_without_SLL1}
\begin{split}
&~~~~~~~\min_{\x} ~-G_0\\
&s.t. ~~~||\a^H(\theta_l)\C^{-1}\x||_2^2=G_l,~~ l=1,\cdots, \text{L}_\text{ML}\\
&~~~~~~~ G_l\geq G_0,\\
&~~~~~~~||\x||_2=1
\end{split}
\end{align}

Define $\c_l=\C^{-H}\a(\theta_l)$, $g_0=\sqrt{G}_0$ and $|g_l|=\sqrt{G_l}$, problem (\ref{eq:PGPS_without_SLL1}) can be simplified as
\begin{align}
\begin{split}
&~~~~~\min_{\x} ~-g_0\\
&s.t. ~~~\c_l^H\x=g_l,~~ l=1,\cdots, \text{L}_\text{ML}\\
&~~~~~~~ |g_l|\geq g_0\\
&~~~~~~~||\x||_2=1
\end{split}
\end{align}

The ADMM framework can fast converge by blending the decomposability of dual ascent with superior convergence properties of the multiplier method \cite{BoydPCPE_2011}.
With the aid of this framework, the augmented Lagrangian form of the above optimization problem can be denoted as:
\begin{align} \label{eq:PGPS_without_SLL_Lag}
\begin{split}
&~~\min_{\x, g_0, \g,\u} L_{\rho}=-g_0+\frac{1}{2\rho}||\P^H\x-\g+\rho\u||_2^2\\
&s.t.~~~~~ |g_l|\geq g_0\\
&~~~~~~~||\x||_2=1
\end{split}
\end{align}
where  $\P=[\c_1^H, \cdots, \c_{\text{L}_\text{ML}}^H]^H$ and $\g=[g_1, \cdots, g_{\text{L}_\text{ML}}]^T$.
$\u=[u_1, \cdots, u_{\text{L}_\text{ML}}]^T$ are dual variables.
$\rho$ is the predetermined tuning parameter.
In such a form, the above problem can be decomposed into three subproblems, which updates the variables $\x$, $\{g_0,\g\}$ and $\u$ in (\ref{eq:PGPS_without_SLL_Lag}) iteratively as follows:
\begin{subequations}
\begin{align}
\begin{split}
\{g_0^{(k+1)}, \g^{(k+1)}\}&=\min_{g_0, \g} L_{\rho}\left(\x^{(k)}, g_0, \g, \u^{(k)}\right)\label{eq:PGPS_without_SLL_gg}\\
s.t.~~&|g_l|\geq g_0, ~~l=1,\cdots,\text{L}_\text{ML}
\end{split}
\end{align}
\begin{align}
\begin{split}
\x^{(k+1)}&=\min_{\x} L_{\rho}\left(\x, g_0^{(k+1)}, \g^{(k+1)}, \u^{(k)}\right)\label{eq:PGPS_without_SLL_x}\\
&s.t.~~||\x||_2=1
\end{split}
\end{align}
\begin{align}
\u^{(k+1)}&=\u^{(k)}+\frac{\P^H\x^{(k+1)}-\g^{(k+1)}}{\rho}\label{eq:PGPS_without_SLL_u}
\end{align}
\end{subequations}

The detailed procedures of solving these subproblems are as follows:
\subsubsection{Solving problem (\ref{eq:PGPS_without_SLL_gg})}
Problem (\ref{eq:PGPS_without_SLL_gg}) can be expressed as
\begin{align}
\begin{split}
&\min_{g_0, \g} -g_0+\frac{1}{2\rho}||\P^H\x^{(k)}-\g+\rho\u^{(k)}||_2^2\\
s.t.~~&|g_l|\geq g_0, ~~l=1,\cdots,\text{L}_\text{ML}
\end{split}
\end{align}
where $\P^H\x^{(k)}+\rho\u$ is a $\text{L}_\text{ML}\times 1$ dimensional vector. Hence, the absolute values of its elements divide the domain $[0, \infty]$ into $\text{L}_\text{ML}+1$ sub-domains.
If $\{g_0, \g\}$ can be easily decided in each sub-domain, the optimal solution to this problem is the one that minimizes the cost function $L_{\rho}(g_0, \g)$ with $\text{L}_\text{ML}+1$ suboptimal solutions .

Let $\y=\P^H\x^{(k)}+\rho\u^{(k)}$,
rearrange $\y$ in the form of ascending order with respect to (w.r.t) its elements modulus.
A new vector marked $\tilde{\y}=[\tilde{y}_1, \cdots, \tilde{y}_{\text{L}_\text{ML}}]^T$ with $|\tilde{y}_{l_1}|\geq|\tilde{y}_{l_2}|$ for $1\leq l_1\leq l_2\leq\text{L}_\text{ML}$ is defined.
To obtain the optimal $\{g_0, \g\}$, the optimization problem (\ref{eq:PGPS_without_SLL_gg}) can be rewritten as
\begin{align}
\begin{split}
&\min_{g_0, \g}~~L_{\rho}(g_0, \g)=-g_0+\frac{1}{2\rho}||\tilde{\y}-\g||_2^2\\
&~~s.t.~~~ 0 < g_0\leq |g_l|, ~~l = 1, \cdots, \text{L}_\text{ML}
\end{split}
\end{align}

Since $g_0>0$, its optimal value that minimizes $L_{\rho}(g_0, \g)$ locates in one of the following domains $(0, |\tilde{y}_1|]$, $\cdots$, $[|\tilde{y}_{l-1}|, |\tilde{y}_l|]$, $\dots$, $[|\tilde{y}_{\text{L}_\text{ML}}|,+\infty]$.
The corresponding minimum value of $L_{\rho}(g_0, \g)$ can be computed as follows:
\begin{itemize}
\item If $g_0\in(0, |\tilde{y}_1|]$
\begin{align} \label{eq:PGPS_without_SLL_gg1}
L^{\min}_{\rho}(g_0, \g)=-|\tilde{y}_1|+\frac{1}{2\rho}||\tilde{\y}-\tilde{\y}||_2^2=-|\tilde{y}_1|
\end{align}
with
\begin{align}\label{eq:PGPS_without_SLL_gg_value1}
\begin{split}
\begin{cases}
g_0&=|\tilde{y}_1|\\
\g&=\tilde{\y}
\end{cases}
\end{split}
\end{align}
\item If $g_0\in(|\tilde{y}_{l-1}|, |\tilde{y}_l|]$ for $ l=2, \cdots, \text{L}_\text{ML}$
\begin{align}\label{eq:PGPS_without_SLL_gg2}
\begin{split}
&L^{\min}_{\rho}(g_0, \g)\\
&=\frac{ l}{2\rho} g_0^2-\left(1+\frac{1}{\rho}\sum_{m=1}^l|\tilde{y}_m|\right)g_0+\frac{1}{2\rho}\sum_{m=1}^l|\tilde{y}_m|^2
\end{split}
\end{align}
Let $c_l = \frac{\rho+\sum_{m=1}^l|\tilde{y}_m|}{l}$, the optimal $g_0$ and $\g$ can be obtained via the following expressions:
\begin{align}\label{eq:PGPS_without_SLL_gg_value2}
\begin{split}
\begin{cases}
g_0&=\begin{cases}
c_l, ~~ |\tilde{y}_{l-1}|\leq c_l\leq|\tilde{y}_l|\\
|\tilde{y}_l|,~~c_l>|\tilde{y}_l|\\
|\tilde{y}_{l-1}|,~~c_l<|\tilde{y}_{l-1}|
\end{cases}\\
\g&=\max\{g_0, |\tilde{\y}|\}\odot\exp(1j\angle{\tilde{\y}})
\end{cases}
\end{split}
\end{align}
\item If $g_0\geq|\tilde{y}_{\text{L}_\text{ML}}|$
\begin{align}\label{eq:PGPS_without_SLL_gg3}
\begin{split}
L^{\min}_{\rho}(g_0, \g)&=\frac{\text{L}_\text{ML}}{2\rho} g_0^2-\left(1+\frac{\text{L}_\text{ML}}{\rho}||\tilde{\y}_m||_1\right)g_0\\
&+\frac{\text{L}_\text{ML}}{2\rho}||\tilde{\y}||_1^2
\end{split}
\end{align}
Let $c_{\text{L}_\text{ML}} = \frac{\rho+||\tilde{\y}||_1}{\text{L}_\text{ML}}$, the optimal $\{g_0, \g\}$ can be obtained via the following expressions:
\begin{align}\label{eq:PGPS_without_SLL_gg_value3}
\begin{split}
\begin{cases}
g_0&=\max\left\{c_{\text{L}_\text{ML}},|\tilde{y}_{\text{L}_\text{ML}}|\right\}\\
g_m&=g_0\exp(1j\angle(\tilde{\y}))
\end{cases}
\end{split}
\end{align}
\end{itemize}

As a result, the optimal $\{g_0,\g\}$ to (\ref{eq:PGPS_without_SLL_gg}) is the one that obtains the minimum $L^{\min}_{\rho}(g_0, \g)$ described in (\ref{eq:PGPS_without_SLL_gg1}), (\ref{eq:PGPS_without_SLL_gg2}) and (\ref{eq:PGPS_without_SLL_gg3}).

\subsubsection{Solving problem (\ref{eq:PGPS_without_SLL_x})}
Omit the irrelevant term, problem (\ref{eq:PGPS_without_SLL_x}) can be expressed as
\begin{align}\label{eq:a1_sub_x}
\begin{split}
&\min_{\x} \frac{1}{2\rho}||\P^H\x-\g^{(k+1)}+\rho\u^{(k)}||_2^2\\
&s.t.~~||\x||_2=1
\end{split}
\end{align}

This problem is equivalent to minimizing $||\P^H\x-\g^{(k+1)}+\rho\u^{(k)}||_2^2$ under constraint $||\x||_2=1$.
A pseudo-analytical algorithm with linear computation complexity is proposed for this optimization problem as follows.
Let $\d=\g^{(k+1)}-\rho\u^{(k)}$, the optimal $\x$ that minimizes the objective function in the optimization problem (\ref{eq:a1_sub_x}) can be achieved by minimizing the following function:
\begin{align}
f(\x) = ||\P^H\x-\d||^2_2,
\end{align}
with the constraint $||\x||_2=1$.

Define
\begin{subequations}
\begin{align}
\tilde{\P} &= \begin{bmatrix}
\text{real}(\P) & -\text{imag}(\P)\\
\text{imag}(\P) &\text{real}(\P)
\end{bmatrix}\label{eq:P_tilde_matrix}\\
\tilde{\x}&=\begin{bmatrix}
\text{real}(\x)\\
\text{imag}(\x)
\end{bmatrix}\label{eq:x_tilde_vector}\\
\tilde{\d}&=\begin{bmatrix}
\text{real}(\d)\\
\text{imag}(\d)
\end{bmatrix}\label{eq:d_tilde_vector}
\end{align}
\end{subequations}

Function $f(\x)$ can be rewritten as
\begin{align} \label{eq:f3_real_value}
\begin{split}
f(\tilde{\x}) &= (\tilde{\P}^T\tilde{\x}-\tilde{\d})^T(\tilde{\P}^T\tilde{\x}-\tilde{\d})\\
&=\tilde{\x}^T\tilde{\P}\tilde{\P}^T\tilde{\x}-2\tilde{\x}^T\tilde{\P}\tilde{\d}+\tilde{\d}^T\tilde{\d}
\end{split}
\end{align}

Let $\llambda=[\lambda_1,\cdots, \lambda_N]^T$ and $\U=[\u_1,\cdots, \u_N]$ denote the eigenvalues and the corresponding unit eigenvectors of matrix $\tilde{\P}\tilde{\P}$ respectively.
As a result, there must exist such vectors $\bbeta=[\beta_1, \cdots, \beta_N]^T$ and $\aalpha=[\alpha_1, \cdots, \alpha_N]^T$ that
\begin{subequations}\label{eq:Pd_x_def}
\begin{align}
\tilde{\P}\tilde{\d}&=\U\bbeta\label{eq:Pd_x_def1}\\
\tilde{\x}&=\U\aalpha\label{eq:Pd_x_def2}
\end{align}
\end{subequations}

Define $\LLambda=\text{diag}\{\llambda\}$, substitute (\ref{eq:Pd_x_def}) into (\ref{eq:f3_real_value}) gives
\begin{align}
f(\aalpha, \bbeta) = \aalpha^T\LLambda\aalpha - 2\bbeta^T\aalpha+\tilde{\d}^T\tilde{\d}
\end{align}

With the constraint $||\tilde{\x}||_2=1$, a new Lagrangian function related with $f(\aalpha, \bbeta)$ can be defined as
\begin{align}\label{eq:f3_alpha_beta}
\begin{split}
f_{\nu}(\aalpha, \bbeta)&=f(\aalpha, \bbeta) + \nu(1-\tilde{\x}^T\tilde{\x})\\
&=f(\aalpha, \bbeta) +\nu(1-\aalpha^T\aalpha)
\end{split}
\end{align}

Let the partial derivative of $f_{\nu}(\aalpha, \bbeta)$ w.r.t $\aalpha$ equal to zero, it obtains
\begin{align} \label{eq:allpha}
\aalpha=(\LLambda-\nu\I)^{-1}\bbeta
\end{align}

Substitute ($\ref{eq:allpha}$) into the constraint $||\tilde{\x}||_2=1$, it yields
\begin{align}\label{eq:nu_value}
\begin{split}
\aalpha^T\aalpha&=\bbeta^T\left(\LLambda-\nu\I\right)^{-2}\bbeta\\
&=\sum_{n=1}^N\left(\frac{\beta_n}{\nu-\lambda_n}\right)^2=1
\end{split}
\end{align}

Variable $\nu$ in above equation can be solved by searching a small region as follows with bisection strategy:
\begin{align}
\begin{split}
&\nu_{\min}\in\left[\min\left\{\lambda_n-\sqrt{N}|\beta_n|\right\}, \right.\\
&~~~\left.\min\left\{\min\left\{\lambda_n-|\beta_n|\right\},\max\left\{\lambda_n-\sqrt{N}|\beta_n|\right\}\right\}\right]
\end{split}
\end{align}

The obtainment of the above region is referred to Appendix \ref{app:a} for details, then $\x$ can be updated with the aid of (\ref{eq:x_tilde_vector}) and (\ref{eq:Pd_x_def2}).
Consequently, the solving processes of problem (\ref{eq:PGPS_without_SLL}) can be summarized as in \textbf{Algorithm \ref{alg:PGM-WoSC}}.
For simplicity, mark the proposed algorithm as 'PGM-WoSC', which is the abbreviation of 'Power Gain Maximization Without SLL Constraint'.
The PGM-WoSC is terminated either when the predetermined maximum iteration is reached or when the predetermined estimation accuracy of $\g$ is reached.
\begin{algorithm}[ht]
\caption{The PGM-WoSC algorithm for problem (\ref{eq:PGPS_without_SLL_Lag})}
\begin{algorithmic}[1] \label{alg:PGM-WoSC}
\STATE initialize $\rho\in(1, 10000)$, $\x^{(0)}=\textbf{0}_{\text{L}_\text{ML}\times1}$, $\u^{(0)}=\textbf{0}_{\text{L}_\text{ML}\times1}$
\WHILE{$ k<\textit{IterMax}$~$\text{or}$~$\max\{|\P^H\x-\g|\}>10^{-4}$}
\STATE update $\{g_0^{(k)}, \g^{(k)}\}$ by minimizing $L_{\rho}(g_0, \g)$ in  (\ref{eq:PGPS_without_SLL_gg1}), (\ref{eq:PGPS_without_SLL_gg2}) and (\ref{eq:PGPS_without_SLL_gg3}) with (\ref{eq:PGPS_without_SLL_gg_value1}), (\ref{eq:PGPS_without_SLL_gg_value2}) and (\ref{eq:PGPS_without_SLL_gg_value3}), respectively. \label{alg1:update_gg}
\STATE update $\x^{(k)}$  by solving (\ref{eq:Pd_x_def}) with the estimated $\aalpha$ via (\ref{eq:nu_value}) \label{alg1:update_x}
\STATE update $\u^{(k)}$ with (\ref{eq:PGPS_without_SLL_u}) \label{alg1:update_u}
\STATE $k=k+1$
\ENDWHILE
\STATE return array weight $\w=\C^{-1}\x$.
\end{algorithmic}
\end{algorithm}

\subsection{Power Gain Maximization With SLL Constraint}
In many applications, the array antenna is often required to have high mainlobe power gain with controllable SLL.
To satisfy this practical demand, the optimization problem comes out to be the form of (\ref{eq:PGPS_with_SLL}).
Define $\q_s=\C^{-H}\a(\theta_s)$ and $\P_s=\C^{-H}\A(\theta_s)\C~(s=1, \cdots, \text{L}_\text{SL})$ for $\theta_s\in\TTheta_{\text{L}_\text{SL}}$, problem (\ref{eq:PGPS_with_SLL}) can be reformulated as
\begin{align}
\begin{split}
&~~\min_{\x} ~-g_0\\
&s.t.~~ \P^H\x = \g\\
&~~~~~~\Q^H\x = \h\\
&~~~~~~~~~~g_0 \leq|g_l|\\
&~~~~~~\sqrt{\gamma}g_0 \geq |h_s|\\
&~~~~~~||\x||_2=1
\end{split}
\end{align}
where $\Q=[\q_1^H, \cdots, \q_{\text{L}_\text{SL}}^H]^H$, $\h=[h_1, \cdots, h_{\text{L}_\text{SL}}]^T$.

The augmented Lagrangian problem form of above optimization problem can be written as:
\begin{align}\label{eq:PGPS_with_SLL_Lag}
\begin{split}
&~~\min_{\x,g_0, \g, \h,\u_1, \u_2} ~L_{\rho_1, \rho_2}=-g_0+\frac{1}{2\rho_1}||\P^H\x-\g+\rho_1\u_1||_2^2\\
&~~~~~~~~~~~~~~~~~~~~~~~~~~~~~~~~~~~+\frac{1}{2\rho_2}||\Q^H\x-\h+\rho_2\u_2||_2^2\\
&~~~~~~s.t.~~~~~~g_0 \leq|g_l|\\
&~~~~~~~~~~~~\sqrt{\gamma}g_0 \geq |h_s|\\
&~~~~~~~~~~~~||\x||_2=1
\end{split}
\end{align}
where $\rho_1$ and $\rho_2$ are the predetermined tuning parameters. $\u_1$ and $\u_2$ are dual variables.
The above problem can be decomposed into four subproblems, which updates the
variables $\x$, $\{g_0, \g, \h\}$, $\u_1$ and $\u_2$ in (\ref{eq:PGPS_with_SLL_Lag}) iteratively as follows:

\begin{subequations}\label{eq:PGPS_with_SLL_iter}
\begin{align}\label{eq:PGPS_with_SLL_ggh}
\begin{split}
\{g_0^{(k+1)}, &\g^{(k+1)}, \h^{(k+1)}\}\\
&=\min_{g_0, \g, \h}L_{\rho_1, \rho_2}\left(\x^{(k)}, g_0, \g, \h, \u_1^{(k)}, \u_2^{(k)}\right)\\
&~~~~~~~s.t.~~g_0\leq|g_l|\\
&~~~~~~~~~\sqrt{\gamma}g_0\geq|h_s|
\end{split}
\end{align}
\begin{align}\label{eq:PGPS_with_SLL_x}
\begin{split}
\x^{(k+1)}&=\min_{\x}L_{\rho_1, \rho_2}\left(\x, g_0^{(k+1)}, \g^{(k+1)}, \h^{(k+1)},
\u_1^{(k)}, \u_2^{(k)}\right)\\
&~~~~~~s.t.~~||\x||_2=1
\end{split}
\end{align}
\begin{align}\label{eq:PGPS_with_SLL_u1}
\u_1^{(k+1)}&=\u_1^{(k)} + \frac{\P^H\x^{(k+1)}-\g^{(k+1)}}{\rho_1}
\end{align}
\begin{align}\label{eq:PGPS_with_SLL_u2}
\u_2^{(k+1)}&=\u_2^{(k)} + \frac{\Q^H\x^{(k+1)}-\h^{(k+1)}}{\rho_2}
\end{align}
\end{subequations}

It is noted that $g_0$ in problem (\ref{eq:PGPS_without_SLL_gg}) is only related  to the undetermined variable $\g$. However, in problem (\ref{eq:PGPS_with_SLL_ggh}) $g_0$ is related to the undetermined variables $\g$ and $\h$, which therefore increases the problem complexity. A similar strategy is applied to solve (\ref{eq:PGPS_with_SLL_ggh}) and problem (\ref{eq:PGPS_with_SLL_x}) can be solved via the similar processes for problem (\ref{eq:PGPS_without_SLL_x}) with a series of matrices and vectors transformations.
The detailed procedures of solving these subproblems are as follows:
\subsubsection{Solving problem (\ref{eq:PGPS_with_SLL_ggh})}
Problem (\ref{eq:PGPS_with_SLL_ggh}) can be expressed as
\begin{align}
\begin{split}
&\min_{g_0, \g, \h} -g_0+\frac{1}{2\rho_1}||\P^H\x^{(k)}-\g+\rho_1\u_1^{(k)}||_2^2\\
&~~~~~~~~~~~~+\frac{1}{2\rho_2}||\Q^H\x^{(k)}-\h+\rho_2\u_2^{(k)}||_2^2\\
&~~~~~~~s.t.~~g_0\leq|g_l|\\
&~~~~~~~~~\sqrt{\gamma}g_0\geq|h_s|
\end{split}
\end{align}

where $\z_1=\P^H\x^{(k)}+\rho_1\u_1^{(k)}$ and $\z_2=\Q^H\x^{(k)}+\rho_2\u_2^{(k)}$ are $\text{L}_\text{ML}\times1$ and $\text{L}_\text{SL}\times1$ dimensional vectors, respectively.
Hence, vector $\z$ composed of the elements of $[|\z_1|, |\frac{\z_2}{\sqrt{\gamma}}|]$ in an ascending order is $(\text{L}_\text{ML}+\text{L}_\text{SL})\times1$ dimensional.
Its elements, hereby, divide the domain $[0, \infty]$ into $\text{L}_\text{ML}+\text{L}_\text{SL}+1$ sub-domains.
Hence, the optimal $\{g_0, \g, \h\}$ is one of the suboptimal $\{g_0, \g, \h\}$ to each sub-domain, which minimizes the related objective function $L_{\rho_1, \rho_2}(g_0, \g, \h)$.
Specifically, the solution of the following optimization problem would yield the optimal $\{g_0, \g, \h\}$:
\begin{align}\label{eq:PGPS_with_SLL_ggh_lag}
\begin{split}
&\min_{g_0, \g, \h} L_{\rho_1, \rho_2}(g_0, \g, \h) = -g_0 +\frac{1}{2\rho_1}||\z_1-\g||_2^2 \\
&~~~~~~~~~~~~~~~~~~~~~~~~~~~~~+\frac{1}{2\rho_2}||\z_2-\h||_2^2\\
&s.t. ~~~~~~~~~~ g_0\leq|g_l|,~l=1,\cdots, \text{L}_\text{ML}\\
&~~~~~~~~~~\sqrt{\gamma}g_0\geq|h_s|, ~s=1, \cdots, \text{L}_\text{SL}
\end{split}
\end{align}

Rearrange $\z_1$ and $\z_2$ in an ascending order w.r.t. their modulus respectively to create two new vectors, i.e., $\tilde{\z}_1$ and $\tilde{\z}_2$, and define a new vector $\z$ by rearranging vectors $[|\z_1|, \frac{|\z_2|}{\sqrt{\gamma}}]$ in an ascending order, the elements of $\z$ divides the domain $(0, \infty)$ into $\text{L}_\text{ML}+\text{L}_\text{SL}+1$ sub-domains, which are $(0, z_1]$, $\cdots$, $[z_{m-1}, z_m]$, $\cdots$, $[z_{\text{L}_\text{ML}+\text{L}_\text{SL}+1}, \infty)$ with $m=1, \cdots, \text{L}_\text{ML}+\text{L}_\text{SL}+1$.
As the optimal $g_0$ locates in one of these sub-domains, problem (\ref{eq:PGPS_with_SLL_ggh_lag}) can be solved with the following way:
\begin{itemize}
\item If $g_0\in(0, z_1$]
\begin{align}\label{eq:PGPS_with_SLL_gg1}
\begin{split}
&L_{\rho_1, \rho_2}^{\min}(g_0, \g, \h)\\
&~~~~~=\frac{\text{L}_\text{SL}\gamma}{2\rho_2}g_0^2-(1+\frac{\sqrt{\gamma}}{\rho_2}||\tilde{\z}_2||_1)g_0+\frac{||\tilde{\z}_2||_2^2}{2\rho_2}
\end{split}
\end{align}
where $\tilde{z}_{2, m_2}$ represents the $m_2^{th}$ element of $\tilde{z}_2$.
Let $d_1=\frac{\rho_2+\sqrt{\gamma}||\z_2||_1}{\text{L}_\text{SL}\gamma}$, the optimal $\{g_0, \g, \h\}$ can be computed via the following expressions:
\begin{align}\label{eq:PGPS_with_SLL_gg_value1}
\begin{split}
\begin{cases}
g_0&=\min\{z_1, d_1\}\\
\g&=\max\{g_0, |\tilde{\z}_1|\}\odot\exp(1j\angle{\tilde{\z}_1})\\
\h&=\min\{\sqrt{\gamma}g_0, |\tilde{\z}_2|\}\odot\exp(1j\angle{\tilde{\z}_2})
\end{cases}
\end{split}
\end{align}
\item If $g_0\in[z_{m-1}, z_{m}]$ for $m=2,\cdots,\text{L}_\text{ML}+\text{L}_\text{SL}+1$
\begin{align}\label{eq:PGPS_with_SLL_gg2}
\begin{split}
&L^{\min}_{\rho_1, \rho_2}(g_0, \g, \h)
=\left(\frac{l}{2\rho_1}+\frac{(\text{L}_{\text{SL}}-s+1)\gamma}{2\rho_2}\right)g_0^2\\
&~~-\left(1+\frac{1}{\rho_1}\sum_{m_1=1}^{l}|\tilde{z}_{1,m_1}|+\frac{\sqrt{\gamma}}{\rho_2}\sum_{m_2=s}^{\text{L}_\text{SL}}|\tilde{z}_{2,m_2}|\right)g_0\\
&~~+\left(\frac{1}{2\rho_1}\sum_{m_1=1}^{l}|\tilde{z}_{1,m_1}|^2+\frac{1}{2\rho_2}\sum_{m_2=s}^{\text{L}_\text{SL}}|\tilde{z}_{2,m_2}|^2\right)
\end{split}
\end{align}
where $\tilde{z}_{1,m_1}$ represents the $m_1^{th}$ element of $\tilde{\z}_1$. $l$ and $s$ are indexes that satisfy $\tilde{z}_{1,l}\geq z_m$ and $\tilde{z}_{2,s}\geq z_m$.
Let $d_m=\frac{\rho_1\rho_2+\rho_2\sum_{m_1=1}^{l}|\tilde{z}_{1,m_1}|+\rho_1\sqrt{\gamma}\sum_{m_2=s}^{\text{L}_\text{SL}}|\tilde{z}_{2,m_2}|}{\rho_2l+\rho_1(\text{L}_\text{SL}-s+1)\gamma}$, the optimal $\{g_0, \g, \h\}$ can be computed via the following expressions:
\begin{align}\label{eq:PGPS_with_SLL_gg_value2}
\begin{split}
\begin{cases}
g_0&=\begin{cases}
d_m, ~z_{m-1}\leq d_m \leq z_{m}\\
z_m, ~d_m>z_m\\
z_{m-1},~d_m<z_{m-1}\\
\end{cases}\\
\g&=\max\{g_0, |\tilde{\z}_1|\}\odot\exp(1j\angle{\tilde{\z}_1})\\
\h&=\min\{\sqrt{\gamma}g_0, |\tilde{\z}_2|\}\odot\exp(1j\angle{\tilde{\z}_2})
\end{cases}
\end{split}
\end{align}
\item If $g_0\in[z_{\text{L}_\text{ML}+\text{L}_\text{SL}+1}, \infty)$
\begin{align}\label{eq:PGPS_with_SLL_gg3}
\begin{split}
&L^{\min}_{\rho_1, \rho_2}(g_0, \g, \h)\\
&~~=\left(\frac{\text{L}_\text{ML}}{2\rho_1}\right)g_0^2-\left(1+\frac{1}{\rho_1}||\tilde{\z}||_1\right)g_0+||\tilde{\z}_1||_1
\end{split}
\end{align}
Define $d_{\text{L}_\text{ML}+\text{L}_\text{SL}+1}=\frac{\rho_1+||\tilde{\z}_1||_1}{\text{L}_\text{ML}}$, the optimal $\{g_0, \g, \h\}$ can be computed via the following expressions:
\begin{align}\label{eq:PGPS_with_SLL_gg_value3}
\begin{split}
\begin{cases}
g_0&=\max\{d_{\text{L}_\text{ML}+\text{L}_\text{SL}+1}, z_{\text{L}_\text{ML}+\text{L}_\text{SL}+1}\}\\
\g&=g_0\exp(1j\angle{\tilde{\z}_1})\\
\h&=\tilde{\z}_2
\end{cases}
\end{split}
\end{align}
\end{itemize}

Among such $\text{L}_\text{ML}+\text{L}_\text{SL}+1$ suboptimal $\{g_0, \g, \h\}$, the one that minimizes the objective function $L_{\rho_1, \rho_2}(g_0, \g, \h)$ is the final result of problem (\ref{eq:PGPS_with_SLL_ggh_lag}), which also provides the optimal solution to problem (\ref{eq:PGPS_with_SLL_ggh}).

\subsubsection{Solving problem (\ref{eq:PGPS_with_SLL_x})}
Problem (\ref{eq:PGPS_with_SLL_x}) can be expressed as
\begin{align}
\begin{split}
&\min_{\x} ~~\frac{1}{2\rho_1}||\P^H\x-\g^{(k+1)}+\rho_1\u_1^{(k)}||_2^2\\
&~~~~+\frac{1}{2\rho_2}||\Q^H\x-\h^{(k+1)}+\rho_2\u_2^{(k)}||_2^2\\
&~~~~~~~s.t.~~||\x||_2=1
\end{split}
\end{align}

Let $\d_1=\g^{(k+1)}-\u_1^{(k)}$ and $\d_2=\h^{(k+1)}-\u_2^{(k)}$, the optimal $\x$ that minimizes function $L_{\rho_1, \rho_2}(\x)$ in problem (\ref{eq:PGPS_with_SLL_x}) can be achieved by minimizing the following function:
\begin{align}
f_1(\x)=||\P^H\x-\d_1||_2^2 + ||\Q^H\x-\d_2||_2^2
\end{align}
with constraint $||\x||_2=1$.
Define
\begin{subequations}
\begin{align}
\tilde{\Q}&=\begin{bmatrix}
\text{real}(\Q) & -\text{imag}(\Q) \\
\text{imag}(\Q) &\text{real}(\Q) \\
\end{bmatrix}\\
\tilde{\d}_1&=\begin{bmatrix}
\text{real}(\d_1)\\
\text{imag}(\d_1)
\end{bmatrix}\\
\tilde{\d}_2&=\begin{bmatrix}
\text{real}(\d_2)\\
\text{imag}(\d_2)
\end{bmatrix}
\end{align}
\end{subequations}

Replace $\tilde{\P}\tilde{\d}$ in (\ref{eq:Pd_x_def1}) with $\tilde{\P}\tilde{\d}_1+\tilde{\Q}\tilde{\d}_2$ and redefine $\llambda$ and $\U$ the eigenvalues and unit eigenvectors of matrix $\tilde{\P}^T\tilde{\P}+\tilde{\Q}^T\tilde{\Q}$, the expressions of $\aalpha$ and $\bbeta$ can be rewritten as
\begin{subequations}\label{eq:Pd_x_SLL_def}
\begin{align}
\tilde{\P}\tilde{\d}_1+\tilde{\Q}\tilde{\d}_2&=\U\bbeta \label{eq:Pd_x_SLL_def1}\\
\tilde{\x}&=\U\aalpha \label{eq:Pd_x_SLL_def2}
\end{align}
\end{subequations}

Replace $\tilde{\P}^T\tilde{\P}$ with $\tilde{\P}^T\tilde{\P}+\tilde{\Q}^T\tilde{\Q}$,
problem $\min_{\x}f_1(\x)$ can be solved with the same strategy used for problem (\ref{eq:PGPS_without_SLL_x}).
$\aalpha$ in (\ref{eq:Pd_x_SLL_def2}) can be obtained utilizing the procedures in (\ref{eq:allpha}), (\ref{eq:nu_value}) and the algorithm in Appendix \ref{app:a}.
Hence, the algorithm proposed for solving problem (\ref{eq:PGPS_with_SLL_Lag}) can be summarized in \textbf{Algorithm \ref{alg:PGM-WSC}}.
For simplicity, mark the proposed algorithm as 'PGM-WSC', which is the abbreviation of 'Power Gain Maximization With SLL Constraint'.
The algorithm is terminated either when the predetermined maximum iteration is reached or when the predetermined estimation accuracies of $\g$ and $\h$ are reached.

\begin{algorithm}[ht]
\caption{The PGM-WSC algorithm for problem (\ref{eq:PGPS_with_SLL_Lag})}\label{alg:PGM-WSC}
\begin{algorithmic}[1]
\STATE initialize $\rho_1,\rho_2\in(1, 10000)$, $\x^{(0)}=\textbf{0}_{\text{L}_\text{ML}\times1}$, $\u_1^{(0)}=\textbf{0}_{\text{L}_\text{ML}\times1}$, $\u_2^{(0)}=\textbf{0}_{\text{L}_\text{ML}\times1}$
\WHILE{$ k<\textit{IterMax}$~$\text{or}$~$ \max\{|\P^H\x-\g|\}>10^{-4}$ and $\max\{|\Q^H\x-\h|\}>10^{-4}$}
\STATE update $\{g_0^{(k)}, \g^{(k)}, \h^{(k)}\}$ by minimizing $L_{\rho_1, \rho_2}(g_0, \g, \h)$ in  (\ref{eq:PGPS_with_SLL_gg1}), (\ref{eq:PGPS_with_SLL_gg2}) and (\ref{eq:PGPS_with_SLL_gg3}) with (\ref{eq:PGPS_with_SLL_gg_value1}), (\ref{eq:PGPS_with_SLL_gg_value2}) and (\ref{eq:PGPS_with_SLL_gg_value3}), respectively.\label{alg2:update_ggh}
\STATE update $\x^{(k)}$  by solving $\tilde{\x}$ in (\ref{eq:Pd_x_SLL_def2}) with the estimated $\aalpha$ via (\ref{eq:nu_value})\label{alg2:update_x}
\STATE update $\u_1^{(k)}$ with (\ref{eq:PGPS_with_SLL_u1})\label{alg2:update_u1}
\STATE update $\u_2^{(k)}$ with (\ref{eq:PGPS_with_SLL_u2})\label{alg2:update_u2}
\STATE $k=k+1$
\ENDWHILE
\STATE return array weight $\w=\C^{-1}\x$.
\end{algorithmic}
\end{algorithm}

\subsection{Computational Complexity}
Multiplications involved in one iteration of two proposed algorithms are compared with those of the PGPS-based algorithms in \cite{LeiYHZCQ_2019, LeiHCTPX_2020, ZhangJZ_2020e}:
\begin{itemize}
\item PGPS-based algorithms in \cite{LeiYHZCQ_2019, LeiHCTPX_2020, ZhangJZ_2020e}: They exhibit similar form as that used in \cite{LiangXZL_2017}, which has been proved to be computed with complexity $O\{\max\{N, \text{L}_\text{ML}+\text{L}_\text{SL}\}^4N^{0.5}\log(1/\epsilon)\}$ where $\epsilon\thickapprox10^{-8}$.
    Since algorithm in \cite{ZhangJZ_2020e} has a two-stage independent loop, its complexity is doubled in each single iteration.
\item PGM-WoSC in Algorithm \ref{alg:PGM-WoSC}: the computation mainly includes three parts: 1). the computation of $\{g_0, \g\}$ in step \ref{alg1:update_gg} is $O\{N\text{L}_\text{ML}\}$; 2). the computation of $\x$ in step \ref{alg1:update_x} is $O\{N^2\text{L}_\text{ML}\log N\}$; 3). the computation of $\u$ in step \ref{alg1:update_x} is $O\{N\text{L}_\text{ML}\}$, Hence, the total computation is $O\{N\text{L}_\text{ML}+N^2\text{L}_\text{ML}\log N+N\text{L}_\text{ML}\}=O\{N^2\text{L}_\text{ML}\log N\}$.
\item PGM-WSC in Algorithm \ref{alg:PGM-WSC}: the computation mainly includes four parts: 1). the computation of $\{g_0, \g, \h\}$ in step \ref{alg2:update_ggh} is $O\{N\max\{\text{L}_\text{ML},\text{L}_\text{SL}\}\}$; 2). the computation of $\x$ in step \ref{alg2:update_x} is $O\{N^2\max\{\text{L}_\text{ML}, \text{L}_\text{SL}\}\log N\}$; 3). the computation of $\u_1$ in step \ref{alg2:update_u1} is $O\{N\text{L}_\text{ML}\}$;
    and 4). the computation of $\u_2$ in step \ref{alg2:update_u2} is $O\{N\text{L}_\text{SL}\}$.
     Hence, the total computation is $O\{N\max\{\text{L}_\text{ML},\text{L}_\text{SL}\}+N^2\max\{\text{L}_\text{ML}, \text{L}_\text{SL}\}\log N+N\text{L}_\text{ML}+N\text{L}_\text{SL}\}=O\{N^2\max\{\text{L}_\text{ML}, \text{L}_\text{SL}\}\log N\}$.
\end{itemize}

The analysis shows that the proposed PGM-WoSC has a similar computational complexity compared with that of the proposed PGM-WSC, and both of them are much faster than the PGPS-based algorithms presented in \cite{LeiYHZCQ_2019, LeiHCTPX_2020, ZhangJZ_2020e}.
Besides,  \cite{LeiHCTPX_2020} is the fastest among \cite{LeiYHZCQ_2019, LeiHCTPX_2020, ZhangJZ_2020e} as it does not require iterative process while \cite{ZhangJZ_2020e} is the slowest due to the utilization of the two-stage iteration processes.
\begin{figure*}[!t]
\centering
\subfigure[]{\includegraphics[width=2.4in]{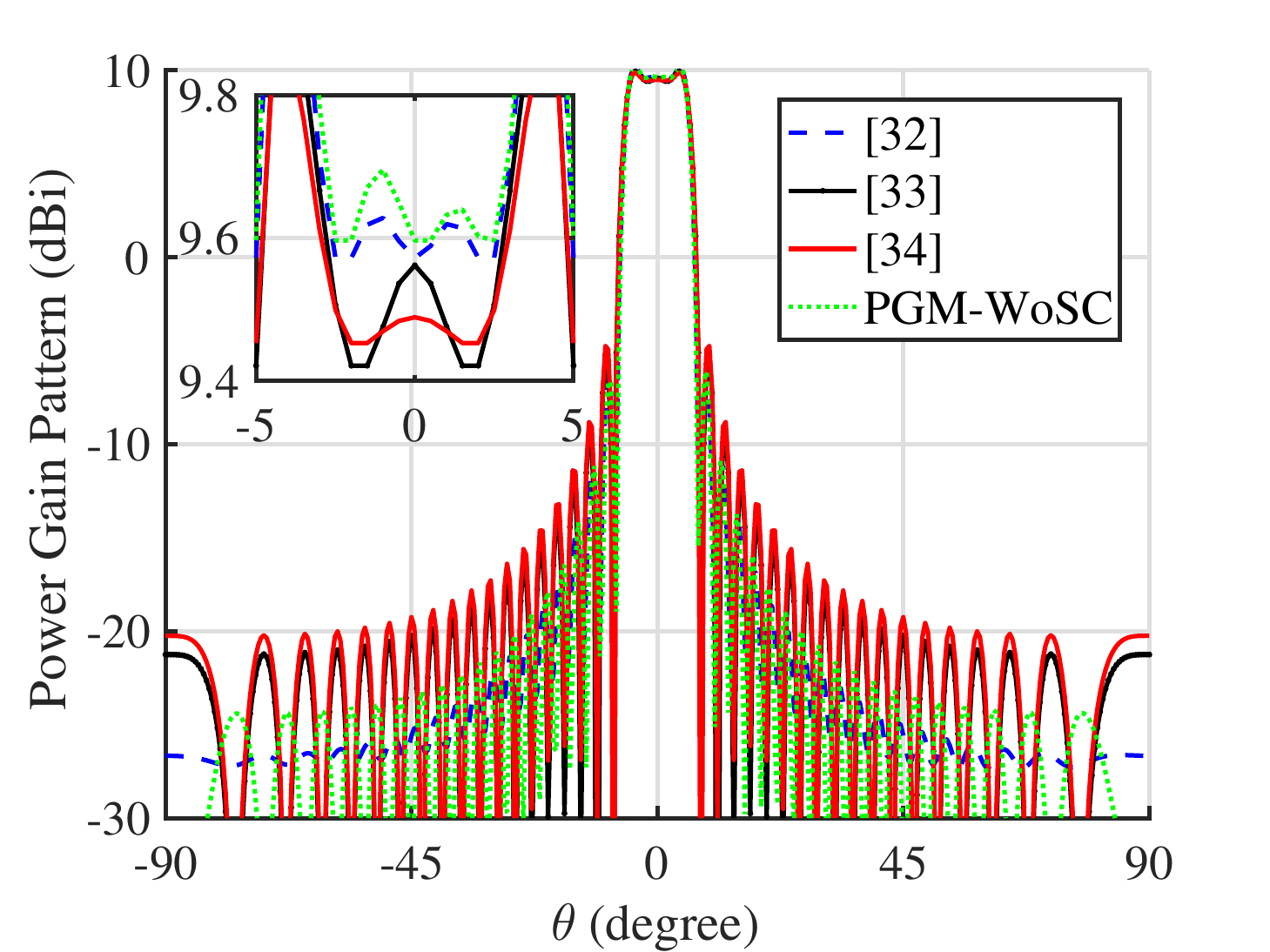}\label{fig:IEP_wd10}}
\subfigure[]{\includegraphics[width=2.4in]{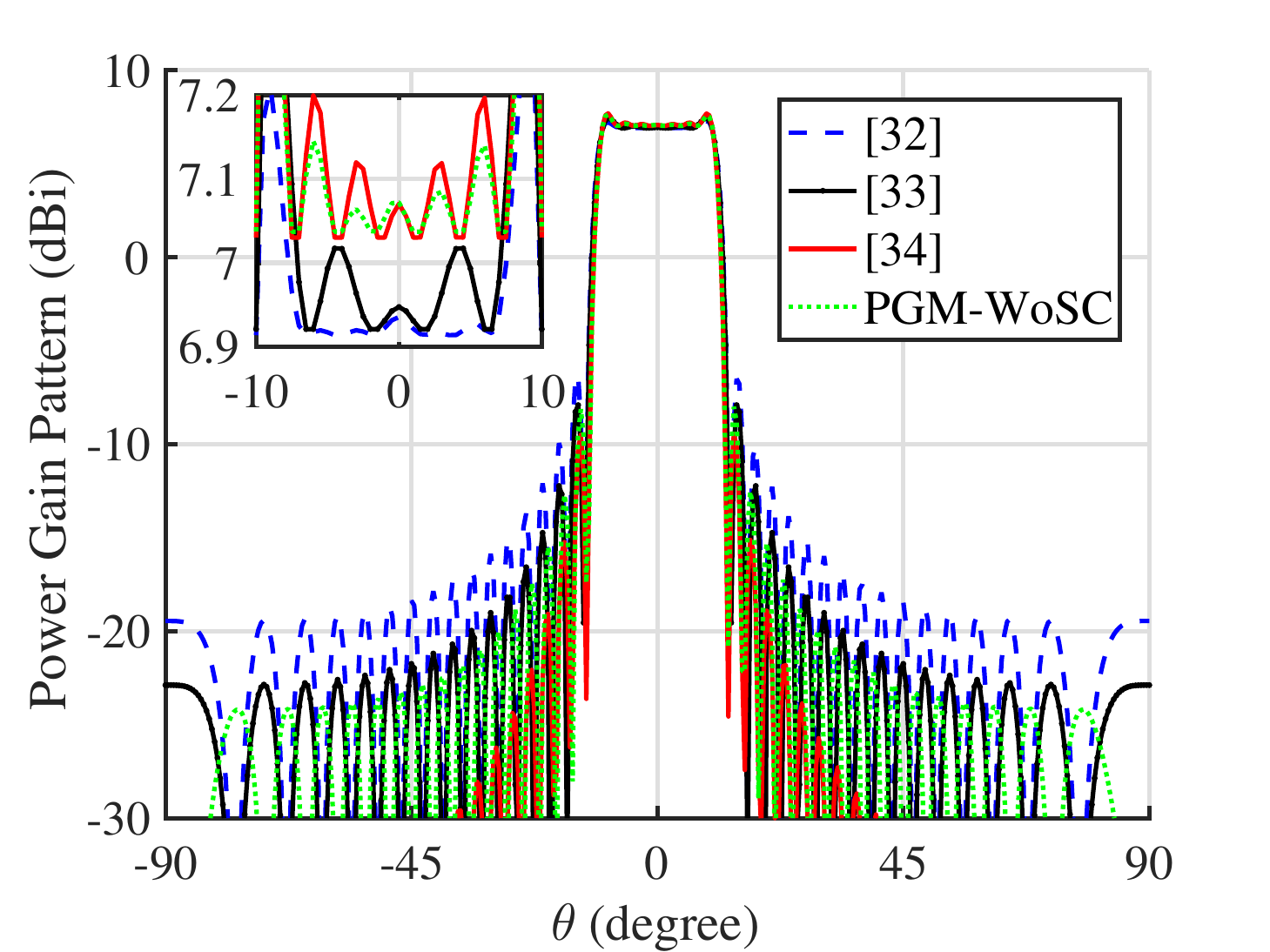}\label{fig:IEP_wd20}}
\subfigure[]{\includegraphics[width=2.4in]{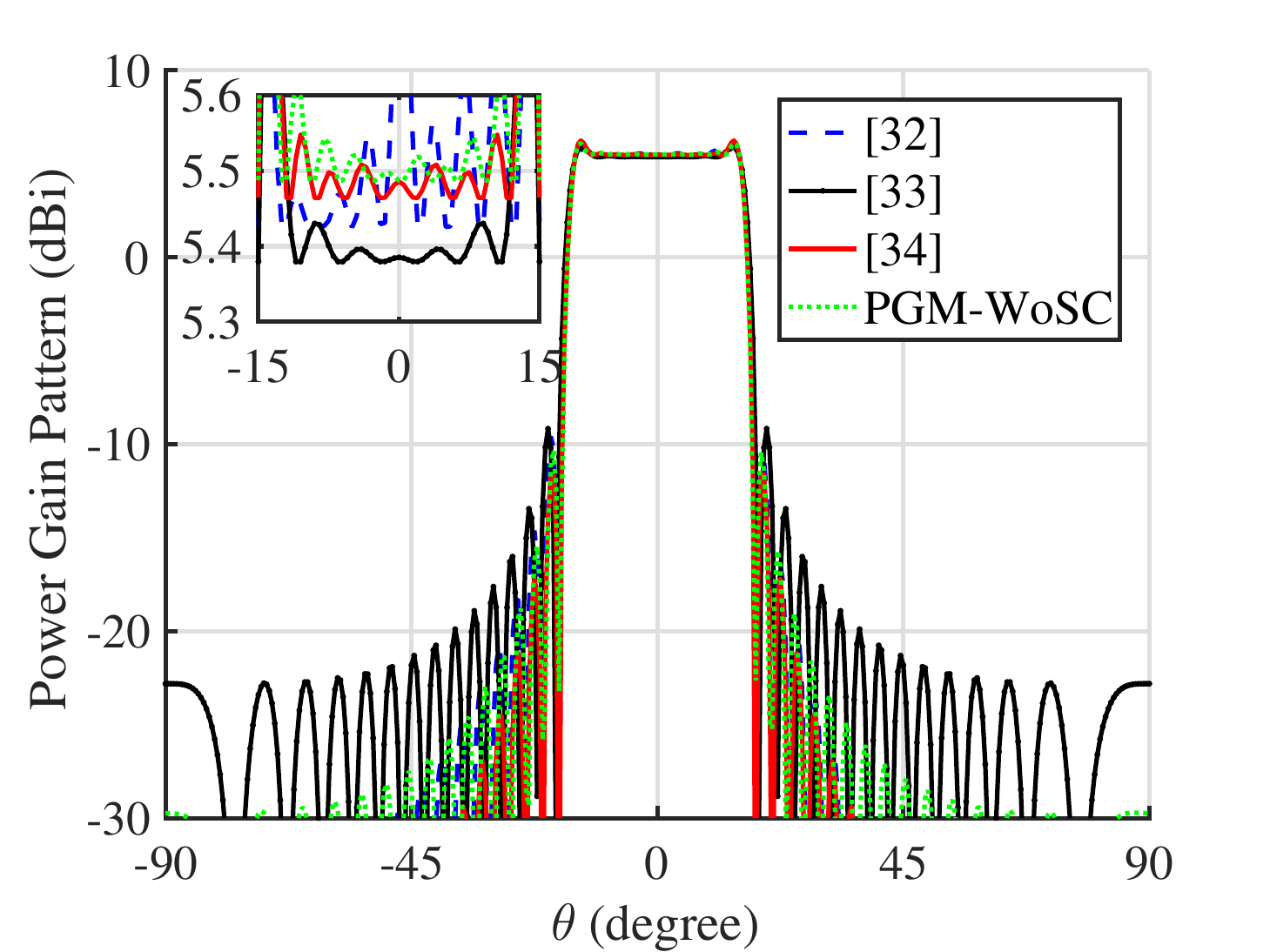}\label{fig:IEP_wd30}}
\subfigure[]{\includegraphics[width=2.4in]{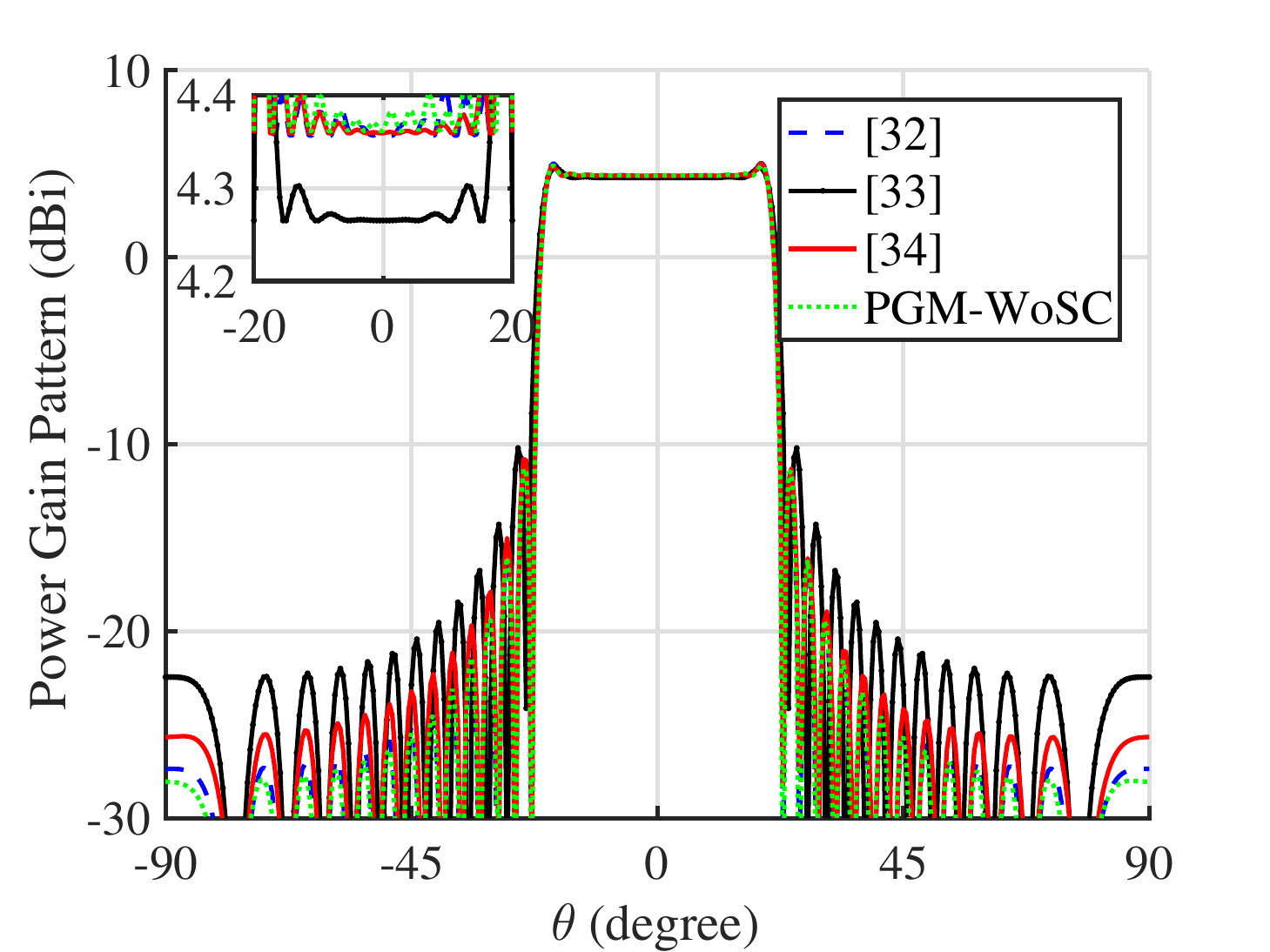}\label{fig:IEP_wd40}}
\caption{The synthesized power gain patterns with different beamwidth via $41$-element linear array antenna with the IEP and $\theta_c=0^o$: (a) $\boldsymbol{\Theta}_\text{bw}=10^o$, (b) $\boldsymbol{\Theta}_\text{bw}=20^o$, (c) $\boldsymbol{\Theta}_\text{bw}=30^o$, and (d) $\boldsymbol{\Theta}_\text{bw}=40^o$.}
\label{fig:IEP_wd}
\end{figure*}
\subsection{Convergence Analysis}
The convergence of the classic ADMM algorithm to handle convex optimization problem has been discussed in \cite{BoydPCPE_2011}.
However, in this paper the convergence of the ADMM with nonconvex problem needs to be further discussed \cite{WenYLM_2012}.
Here, a proposition is introduced to prove that the proposed algorithms can converge to a local optimal solution under some mild assumptions.
\begin{proposition} \label{pro:proposition1}
Consider the update steps in \text{(\ref{eq:PGPS_with_SLL_iter})}, if the following assumptions hold,
\begin{subequations}\label{eq:Convergence}
\begin{align}\label{eq:Convergence_1}
\lim \limits_{{k} \to \infty} (\u_1^{(k+1)}-\u_1^{(k)})=\textbf{0}
\end{align}
\begin{align}\label{eq:Convergence_2}
&\lim \limits_{{k} \to \infty} (\u_2^{(k+1)}-\u_2^{(k)})=\textbf{0}
\end{align}
\begin{align}\label{eq:Convergence_3}
&\rho_1>0,~\rho_2>0
\end{align}
\end{subequations}
then, the sequence $\{\x^{(k+1)}$, $g_0^{(k+1)}$, $\g^{(k+1)}$, $\h^{(k+1)}$, $\u_1^{(k+1)}$, $\u_2^{(k+1)}\}$ can converge to the locally optimal solution $\{\x^{(*)}, g_0^{(*)}, \g^{(*)}, \h^{(*)}, \u_1^{(*)}, \u_2^{(*)}\}$.
\end{proposition}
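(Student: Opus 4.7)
The plan is to combine the two vanishing-increment hypotheses on the duals with the fact that every subproblem in Algorithm~\ref{alg:PGM-WSC} is solved to its global minimum, in order to assemble the full Karush--Kuhn--Tucker system of the reformulated problem at any limit point of the iterates. Because the $\{g_0,\g,\h\}$ update and the $\x$ update produce exact minimizers on their (nonconvex) constraint sets, the usual per-iteration stationarity identities hold with equality, so the work reduces to transferring these identities to the limit and upgrading the vanishing residuals to genuine feasibility.

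First I would establish asymptotic primal feasibility. Rearranging the $\u_1$ update (\ref{eq:PGPS_with_SLL_u1}) gives $\P^H\x^{(k+1)} - \g^{(k+1)} = \rho_1(\u_1^{(k+1)} - \u_1^{(k)})$, so hypothesis (\ref{eq:Convergence_1}) together with $\rho_1 > 0$ forces $\P^H\x^{(k+1)} - \g^{(k+1)} \to \mathbf{0}$. An identical manipulation of (\ref{eq:PGPS_with_SLL_u2}) combined with (\ref{eq:Convergence_2}) yields $\Q^H\x^{(k+1)} - \h^{(k+1)} \to \mathbf{0}$. The spherical constraint $\|\x^{(k)}\|_2 = 1$ is enforced at every $\x$-update (step~\ref{alg2:update_x}), while the bounds $g_0^{(k)} \le |g_l^{(k)}|$ and $\sqrt{\gamma}g_0^{(k)} \ge |h_s^{(k)}|$ are embedded in the branch-and-compare procedure of step~\ref{alg2:update_ggh}. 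Consequently every accumulation point $\{\x^{*}, g_0^{*}, \g^{*}, \h^{*}\}$ is feasible for the constrained reformulation that precedes (\ref{eq:PGPS_with_SLL_Lag}).

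Next I would extract stationarity. The analytical enumeration over the $\text{L}_\text{ML} + \text{L}_\text{SL} + 1$ sub-domains delivers the exact global minimizer of $L_{\rho_1,\rho_2}(g_0,\g,\h)$ on its nonconvex feasible set, so the inclusion $\mathbf{0} \in \partial_{(g_0,\g,\h)} L_{\rho_1,\rho_2} + \mathcal{N}(g_0^{(k+1)}, \g^{(k+1)}, \h^{(k+1)})$ holds identically in $k$, where $\mathcal{N}$ denotes the normal cone of the constraint set. The $\x$ subproblem is reduced to minimizing a quadratic on the unit sphere and is solved via (\ref{eq:allpha})--(\ref{eq:nu_value}), producing the eigenvalue-type relation $(\tilde{\P}\tilde{\P}^T + \tilde{\Q}\tilde{\Q}^T)\tilde{\x}^{(k+1)} - (\tilde{\P}\tilde{\d}_1 + \tilde{\Q}\tilde{\d}_2) = \nu^{(k+1)} \tilde{\x}^{(k+1)}$ with unit-norm $\tilde{\x}^{(k+1)}$. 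Passing to the limit by continuity of the maps involved and identifying $\u_1^{*}, \u_2^{*}$ with the Lagrange multipliers of the equality constraints $\P^H\x = \g$ and $\Q^H\x = \h$, these stationarity relations together with the feasibility from the previous step form the KKT system of the reformulated problem, certifying that $\{\x^{*}, g_0^{*}, \g^{*}, \h^{*}, \u_1^{*}, \u_2^{*}\}$ is a KKT point and hence a candidate local minimizer.

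The main obstacle is precisely the nonconvexity: the sphere $\|\x\|_2 = 1$ and the pointwise bounds $g_0 \le |g_l|$ place the problem outside the scope of the classical convex ADMM theory of \cite{BoydPCPE_2011}. The proposition side-steps this by positing the dual stability (\ref{eq:Convergence_1})--(\ref{eq:Convergence_2}) as hypotheses rather than deriving them from a descent lemma; once granted, the remaining difficulty is only the closedness of the generalized stationarity conditions at the limit, which is standard because the iterates are automatically bounded (the sphere fixes $\|\x^{(k)}\|_2 = 1$, and each sub-domain branch produces componentwise bounds on $\g^{(k)}$ and $\h^{(k)}$ through the update formulas (\ref{eq:PGPS_with_SLL_gg_value1})--(\ref{eq:PGPS_with_SLL_gg_value3})). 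A strictly self-contained argument without (\ref{eq:Convergence_1})--(\ref{eq:Convergence_2}) as inputs would additionally require a Lyapunov-type descent estimate on $L_{\rho_1,\rho_2}$ plus a coercivity condition to guarantee the duals themselves stabilize; this heavier analysis is avoided here.
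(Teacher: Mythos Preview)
Your argument is sound and in fact goes further than the paper's own proof. Both routes begin identically, reading off $\P^H\x^{(k+1)}-\g^{(k+1)}\to\mathbf{0}$ and $\Q^H\x^{(k+1)}-\h^{(k+1)}\to\mathbf{0}$ from the dual updates and hypotheses (\ref{eq:Convergence_1})--(\ref{eq:Convergence_2}). From there, however, the paper simply invokes the triangle inequality and the compactness of the unit sphere to conclude that $\{\x^{(k)}\}$, $\{\g^{(k)}\}$, $\{\h^{(k)}\}$ and $\{g_0^{(k)}\}$ are bounded, hence possess limit points, and declares the proof complete; no stationarity or KKT structure is ever extracted, so the ``locally optimal'' part of the conclusion is asserted rather than argued. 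Your proposal, by contrast, uses the fact that each subproblem is solved to its global minimum to pass the per-iteration optimality inclusions (normal-cone condition for $\{g_0,\g,\h\}$, eigenvalue-type relation for $\x$) to the limit and assemble a genuine KKT point. What the paper's approach buys is brevity and no need to discuss generalized subdifferentials on nonconvex sets; what yours buys is an actual justification of local optimality rather than mere boundedness. If you wish to match the paper exactly you can drop the entire stationarity paragraph and stop at boundedness plus asymptotic feasibility; if you want a proof that supports the proposition as stated, keep your version.
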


\begin{proof}
According to (\ref{eq:PGPS_with_SLL_u1})-(\ref{eq:PGPS_with_SLL_u2}) and (\ref{eq:Convergence_1})-(\ref{eq:Convergence_2}), we can obtain

\begin{subequations}\label{eq:Prof_c}
\begin{align}\label{eq:Prof_c_1}
\lim \limits_{{k} \to \infty} (\P^{H}\x^{(k+1)}-\g^{(k+1)})=\textbf{0}
\end{align}
\begin{align}\label{eq:Prof_c_2}
\lim \limits_{{k} \to \infty} (\Q^{H}\x^{(k+1)}-\h^{(k+1)})=\textbf{0}
\end{align}
\end{subequations}

Considering the following inequality:
\begin{subequations}\label{eq:Prof_d}
\begin{align}\label{eq:Prof_d_1}
\begin{split}
||\g^{(k+1)}||_2 &\leq ||\P^{H}\x^{(k+1)}-\g^{(k+1)}||_2+||\P^{H}\x^{(k+1)}||_2 \\
&\leq ||\P^{H}\x^{(k+1)}-\g^{(k+1)}||_2+||\P^{H}||_F||\x^{(k+1)}||_2
\end{split}
\end{align}
\begin{align}\label{eq:Prof_d_1}
\begin{split}
||\h^{(k+1)}||_2 &\leq ||\Q^{H}\x^{(k+1)}-\h^{(k+1)}||_2+||\Q^{H}\x^{(k+1)}||_2 \\
&\leq ||\Q^{H}\x^{(k+1)}-\h^{(k+1)}||_2+||\Q^{H}||_F||\x^{(k+1)}||_2
\end{split}
\end{align}
\end{subequations}

As $||\Q^{H}||_F$ and $||\P^{H}||_F$ are constants and $||\x||_2=1$ is bounded and closed set, the sequence $\{\x^{(k)}\}$ is bounded.
Hence, there exists a local point $\x^{(*)}$ where
\begin{align}\label{eq:Prof_x}
\lim \limits_{{k} \to \infty} \x^{(k+1)}=\x^{(*)}
\end{align}

Then, the sequence $\{\g^{(k)}\}$ and $\{\h^{(k)}\}$ can converge to a limit point $\{\g^{(*)}\}$ and $\{\h^{(*)}\}$.
In addition, according to inequality $g_0\leq g_l$, the sequence $\{g_0^{(k)}\}$ is also bounded.
Hence, there must exist such a stationary point $\{g_0^{(*)}\}$ that $\lim \limits_{{k} \to \infty}  g_0^{(k)} =g_0^{(*)}$.
This completes the proof.
\end{proof}

According to the proposition \ref{pro:proposition1}, if the dual variables, i.e. $\u_1$ and $\u_2$, gradually converge to a stable value, the proposed PGM-WSC algorithm can obtain its local optimal solution for the formulated PGPS problem with the SLL constraint.
The convergence of the proposed PGS-WoSC algorithm can be proved in a similar way.

\section{Simulations and Discussions}
This section carries out several numerical simulations to validate the superiority of the proposed algorithms by comparing the simulation results with those achieved from existing algorithms.

\subsection{Power Gain Optimization for Wide-Beam Without Sidelobe Constraint}
When the power gain in the wide-beam mainlobe is the main concerned parameter, the PGPS-based algorithms without SLL constraint  has been proved to be superior to the existing SBPS-based algorithms  in our previous works\cite{LeiYHZCQ_2019, LeiHCTPX_2020}.
In this part, the proposed PGM-WoSC is only compared with the PGPS-based algorithms in \cite{LeiYHZCQ_2019, LeiHCTPX_2020, ZhangJZ_2020e} by maximizing the minimum power gain in the mainlobe without any sidelobe restriction.

A $41$-element origin-symmetric half-wavelength uniformly distributed linear array antenna is simulated.
The radiation pattern of the element is assumed to be isotropic element pattern (IEP).
The parameters setting are as follows: the mainlobe beamwidth is $\boldsymbol{\Theta}_\text{bw}=10^o$, $20^o$, $30^o$ and $40^o$, the angular resolution $\Delta\theta=0.5^o$ and the angular distance between $\boldsymbol{\Theta}_\text{ML}$ and $\boldsymbol{\Theta}_\text{SL}$ is $3^o$.
The parameters set for different algorithms are as follows:
\begin{itemize}
\item \cite{LeiYHZCQ_2019}: $\alpha=0.2$ and $\beta=0.01$, the maximum iteration is $200$ and $gap=10^{-2}$.
\item \cite{LeiHCTPX_2020}: no more parameter settings.
\item \cite{ZhangJZ_2020e}: The initial guess of $\x$ is obtained via pseudo-inverse excitation (PIE) scheme, $c_0=1$, $c_1=0.9$, $\delta_{\max}=0.015$, the maximum iteration is $Itr_{\max}=100$ and $gap_0=10^{-4}$.
\item The proposed PGM-WoSC: $\rho = 1000$ with a decreasing factor of $0.99$ for each iteration and the maximum iteration is $2000$.
\end{itemize}
\begin{figure}[!t]
\centering
\subfigure[]{\includegraphics[width=2.4in]{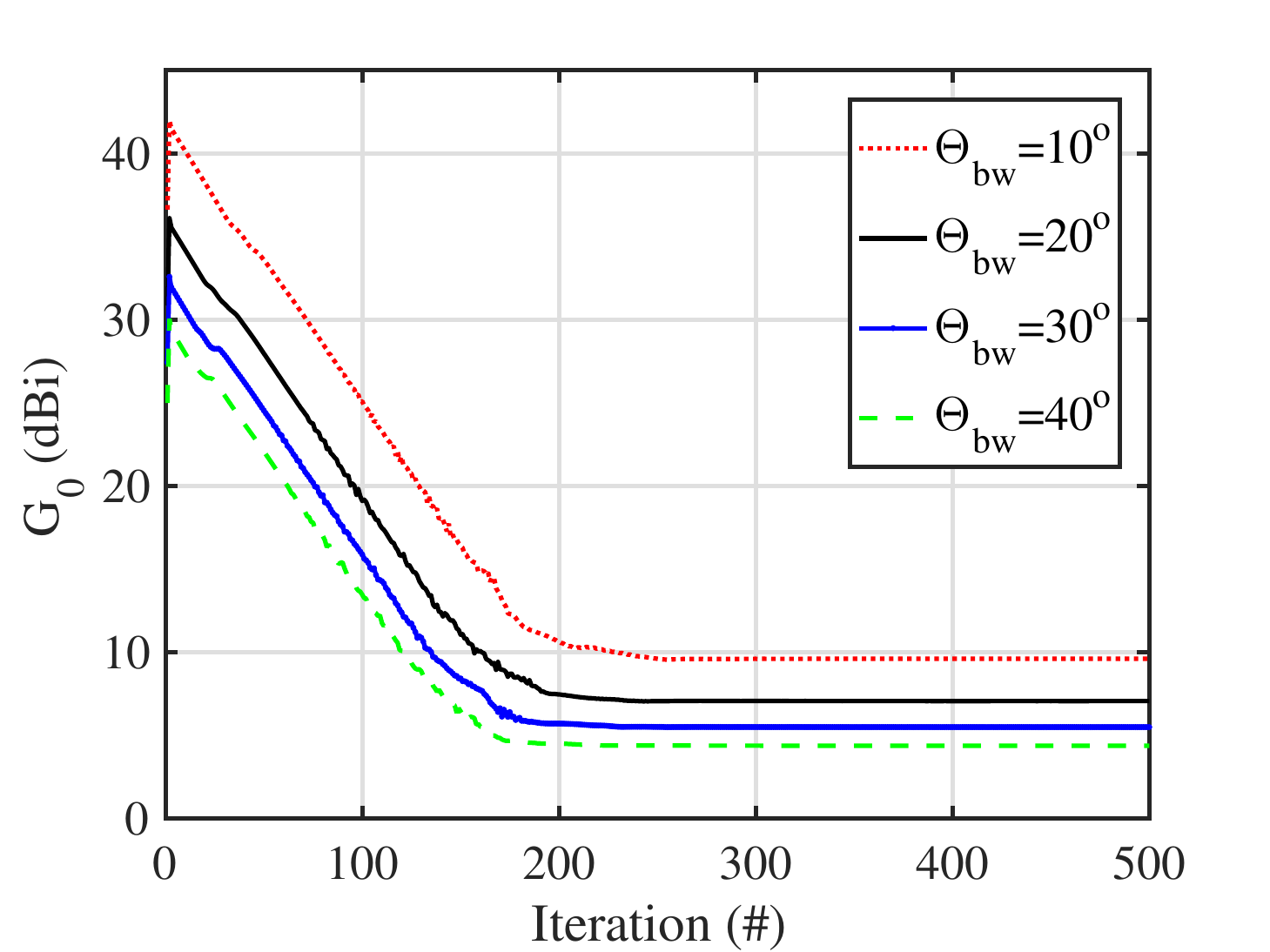}\label{fig:IEP_WoSC_iter_g0}}
\subfigure[]{\includegraphics[width=2.4in]{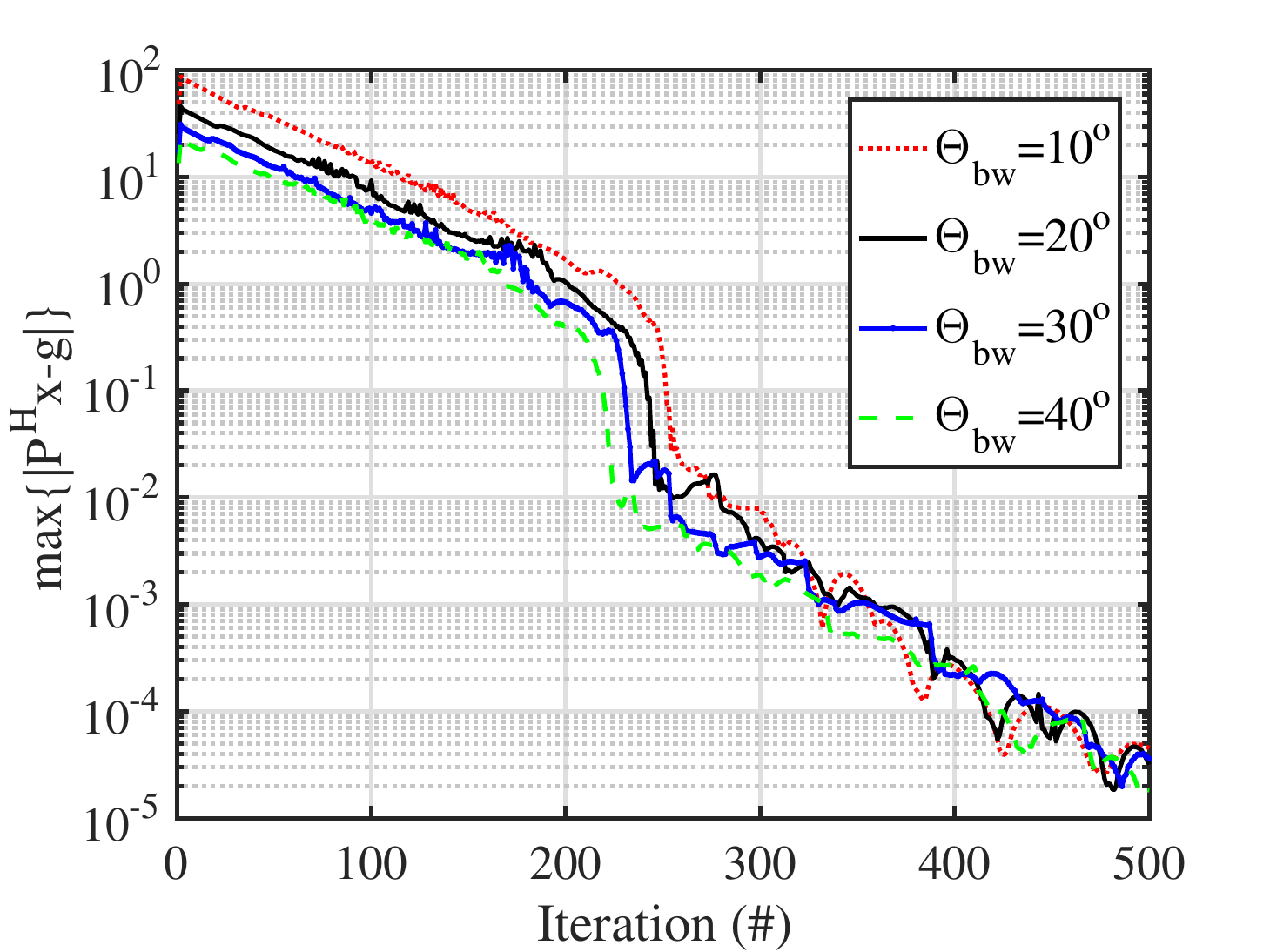}\label{fig:IEP_WoSC_iter_est}}
\caption{The convergence speed of $G_0$ and $\max\{|\P^H\x-\g|\}$ for the proposed PGM-WoSC algorithm w.r.t. the iteration number: (a) $G_0$ w.r.t the iteration number and (b) $\max\{|\P^H\x-\g|\}$ w.r.t the iteration number.}
\label{fig:IEP_WoSC_iter}
\end{figure}
\begin{figure}[!t]
\centering
\subfigure[]{\includegraphics[width=2.4in]{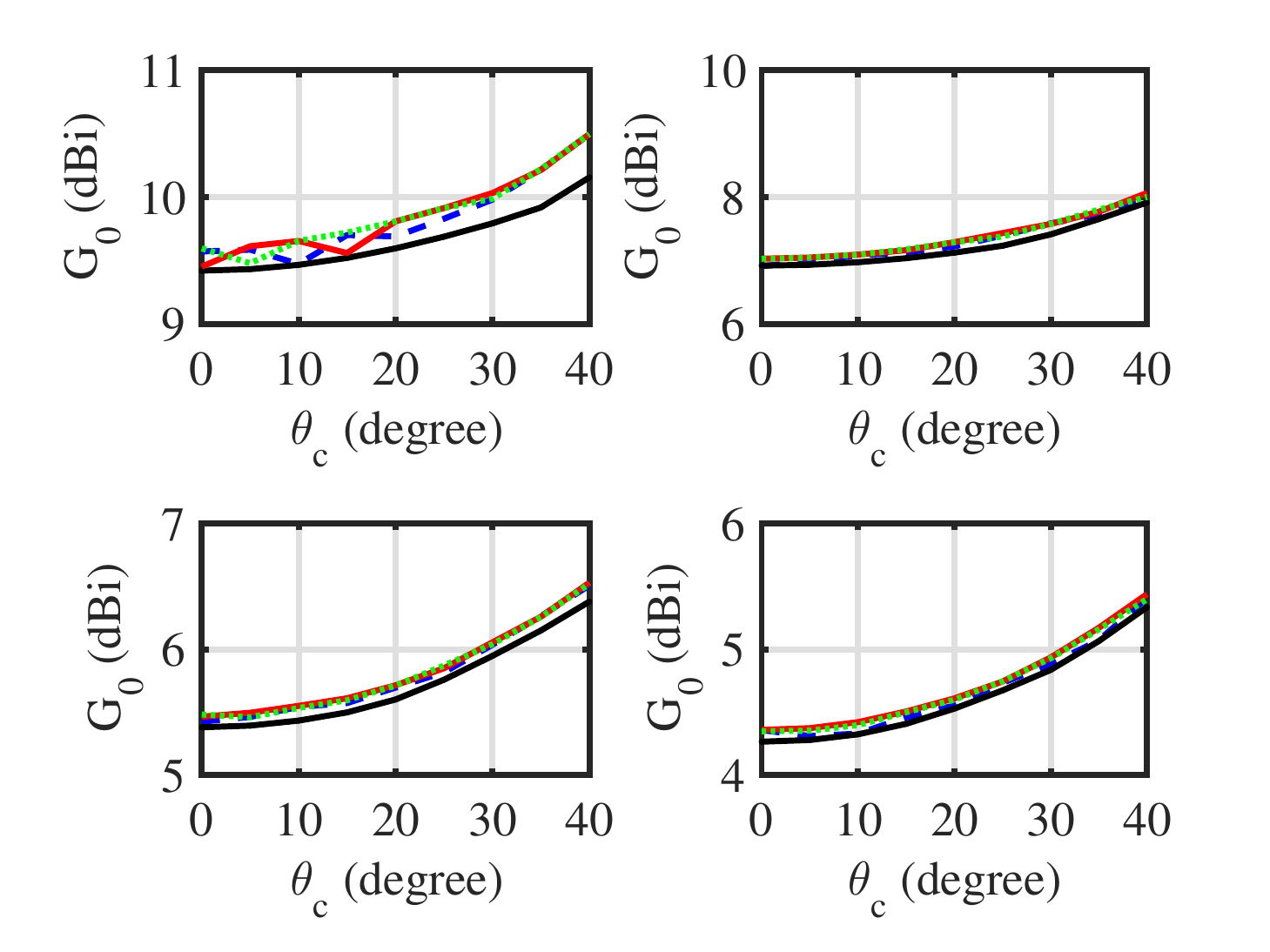}\label{fig:IEP_WoSC_g0}}
\subfigure[]{\includegraphics[width=2.4in]{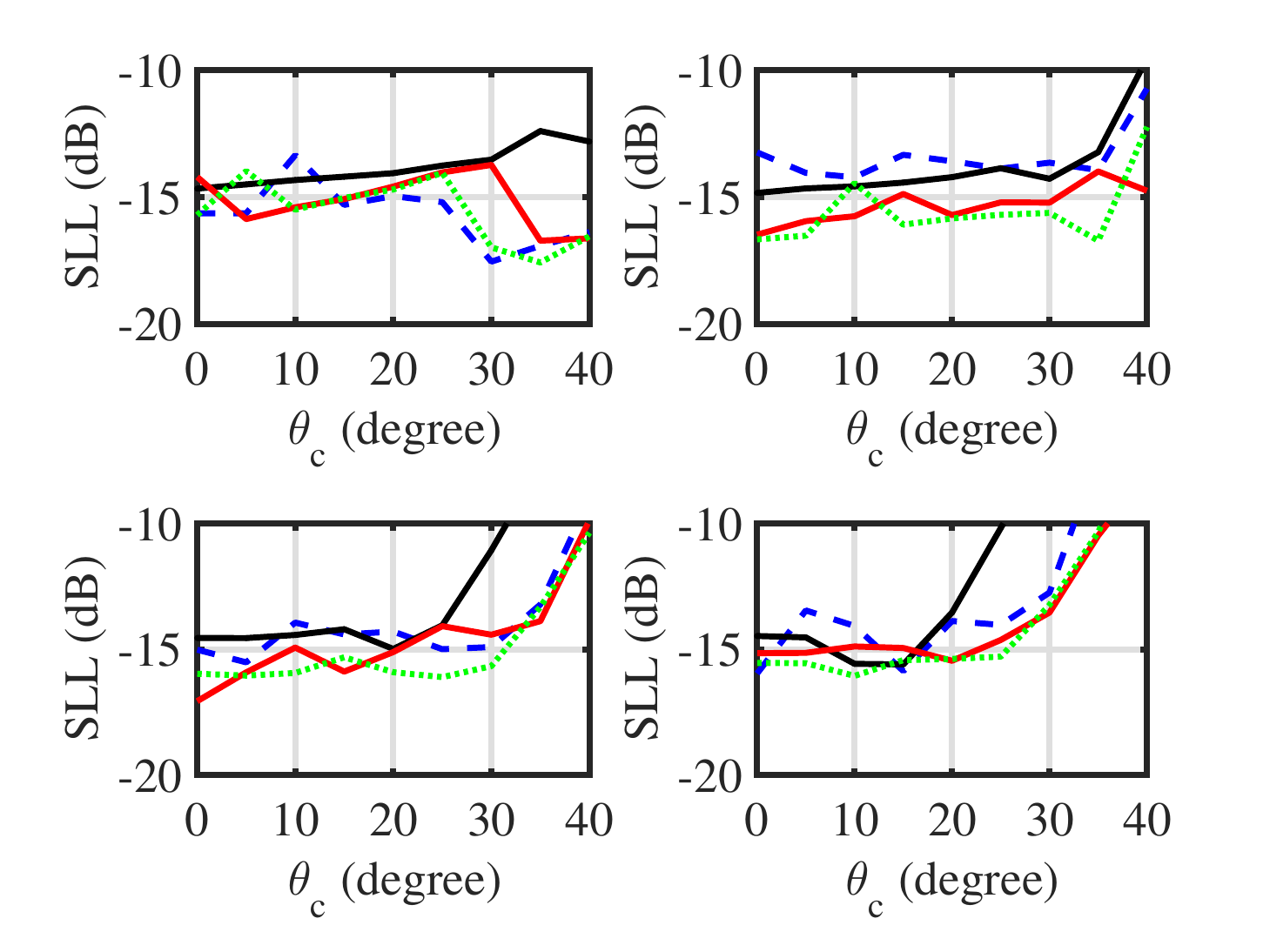}\label{fig:IEP_WoSC_PSLL}}
\caption{The synthesized wide-beam performance via uniformly distributed IEP array: (a) the max-minimum power gain in the wide-beam mainlobe  and (b) the peak SLL.
The top-left, top-right, down-left and down-right pictures in the subfigures represent the results with the beamwidth being $10^o, 20^0, 30^o$ and $40^o$, respectively.
The blue double dot dash line, the black dash dot line, the red solid line and the green dotted line represent the algorithms in \cite{LeiYHZCQ_2019, LeiHCTPX_2020, ZhangJZ_2020e} and the proposed PGM-WoSC, respectively.}
\label{fig:IEP_WoSC}
\end{figure}

\subsubsection{Example 1 (Maximize the Minimum Power Gain Without SLL Constraint via the IEP Array)}
\begin{table}[t]
\caption{The Max-Minimum Power Gain $G_0$ in the mainlobe without the SLL constraint with $\boldsymbol{\Theta}_\text{bw}=10^0, 20^o, 30^o$ and $40^o$ (unit: \textit{dBi})}\label{tab:tab_g0_WoSC}
\begin{center}
\begin{tabular}{c| c|  c| c| c}
\hline\hline
{\diagbox{Alg.}{$\boldsymbol{\Theta}_\text{bw}$}} & $10^o$ &$20^o$ &$30^o$ &$40^o$ \\
\hline\hline
\cite{LeiYHZCQ_2019} &9.57   &$6.91$ &$5.42$ &$4.35$\\
\hline
\cite{LeiHCTPX_2020} &9.42  &$6.92$  &$5.38$ &$4.26$ \\
\hline
\cite{ZhangJZ_2020e} &9.45  &$7.03$  &$5.46$ &$4.36$ \\
\hline
PGM-WoSC &9.59&$7.04$ &$5.49$  &$4.36$    \\
\hline\hline
\end{tabular}
\end{center}
\end{table}
Fig. \ref{fig:IEP_wd} plots the power gain patterns achieved with four algorithms under different beamwidth $\boldsymbol{\Theta}_\text{bw}$.
The results demonstrate that all algorithms can obtain wide-beam mainlobe and the proposed algorithm can always obtain the highest max-minimum power gains for different $\boldsymbol{\Theta}_\text{bw}$.
The concrete $G_0$ values are provided in Table \ref{tab:tab_g0_WoSC}, which further shows that the proposed PGM-WoSC algorithm performs better than the other algorithms.
Fig. \ref{fig:IEP_WoSC_iter} shows the convergence speed of $G_0$ and $\max\{|\P^H\x-\g|\}$ w.r.t. the iteration number. The results indicate that $G_0$ reaches its optimal value after $200$ iterations while $\max\{|\P^H\x-\g|\}$ is smaller than $10^{-4}$ after $500$ iterations.
It, therefore, proves that the proposed PGM-WoSC algorithm can converge to its suboptimal solution quite fast since the procedures in each iteration is pseudo-analytically.
\begin{figure*}[!t]
\centering
\subfigure[]{\includegraphics[width=2.4in]{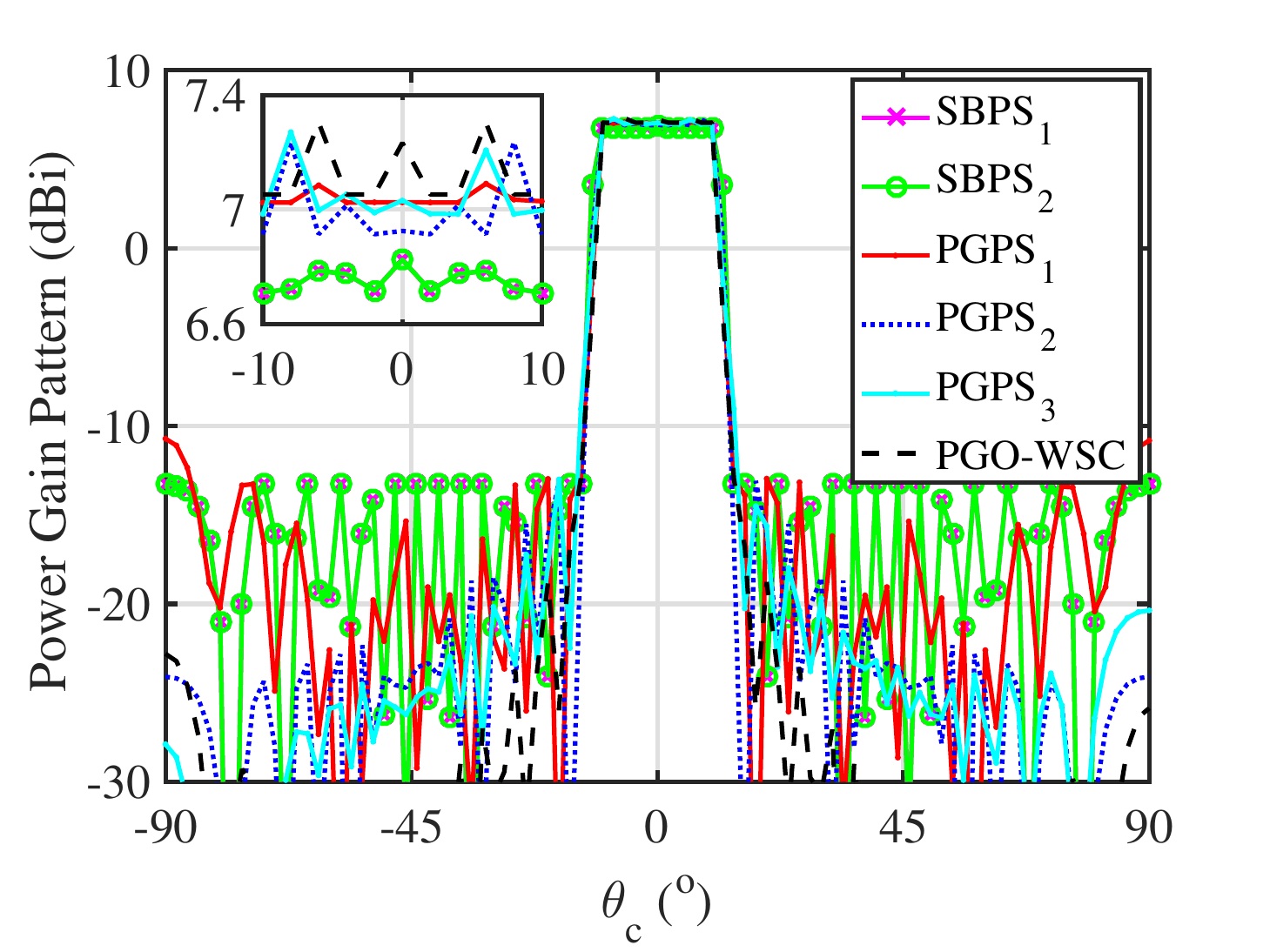}\label{fig:IEP_SLL20_wd20}}
\subfigure[]{\includegraphics[width=2.4in]{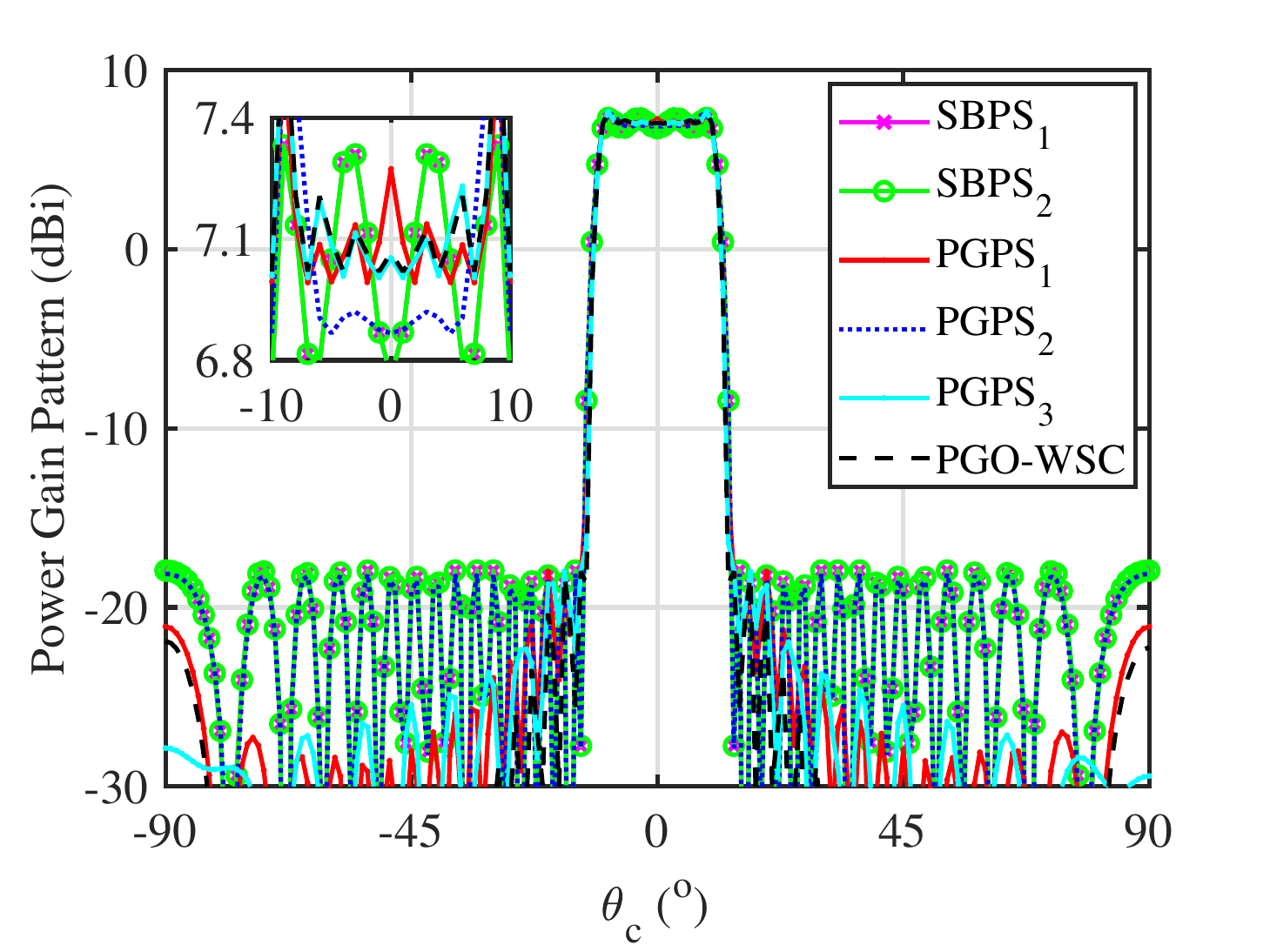}\label{fig:IEP_SLL25_wd20}}
\subfigure[]{\includegraphics[width=2.4in]{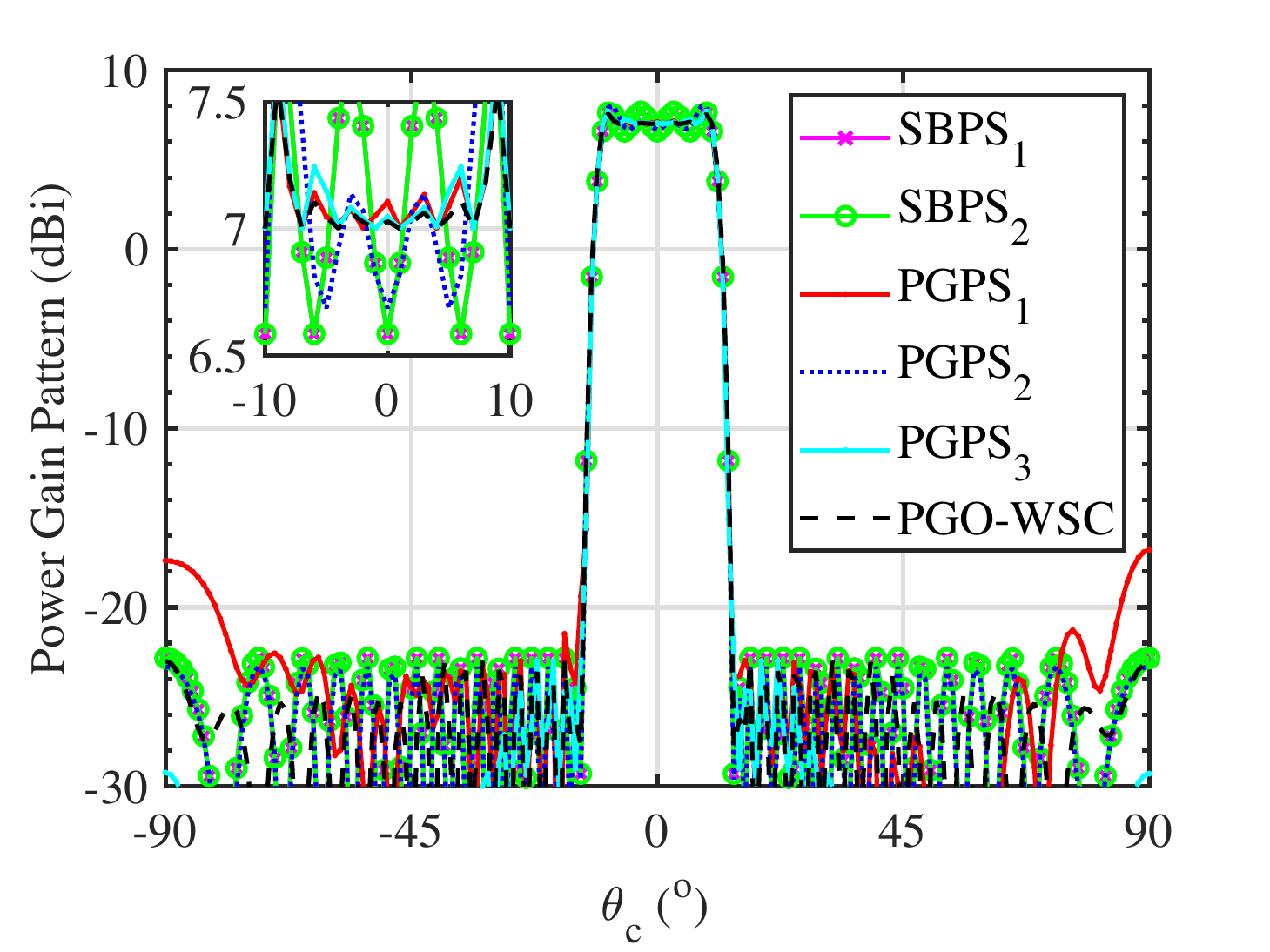}\label{fig:IEP_SLL30_wd20}}
\subfigure[]{\includegraphics[width=2.4in]{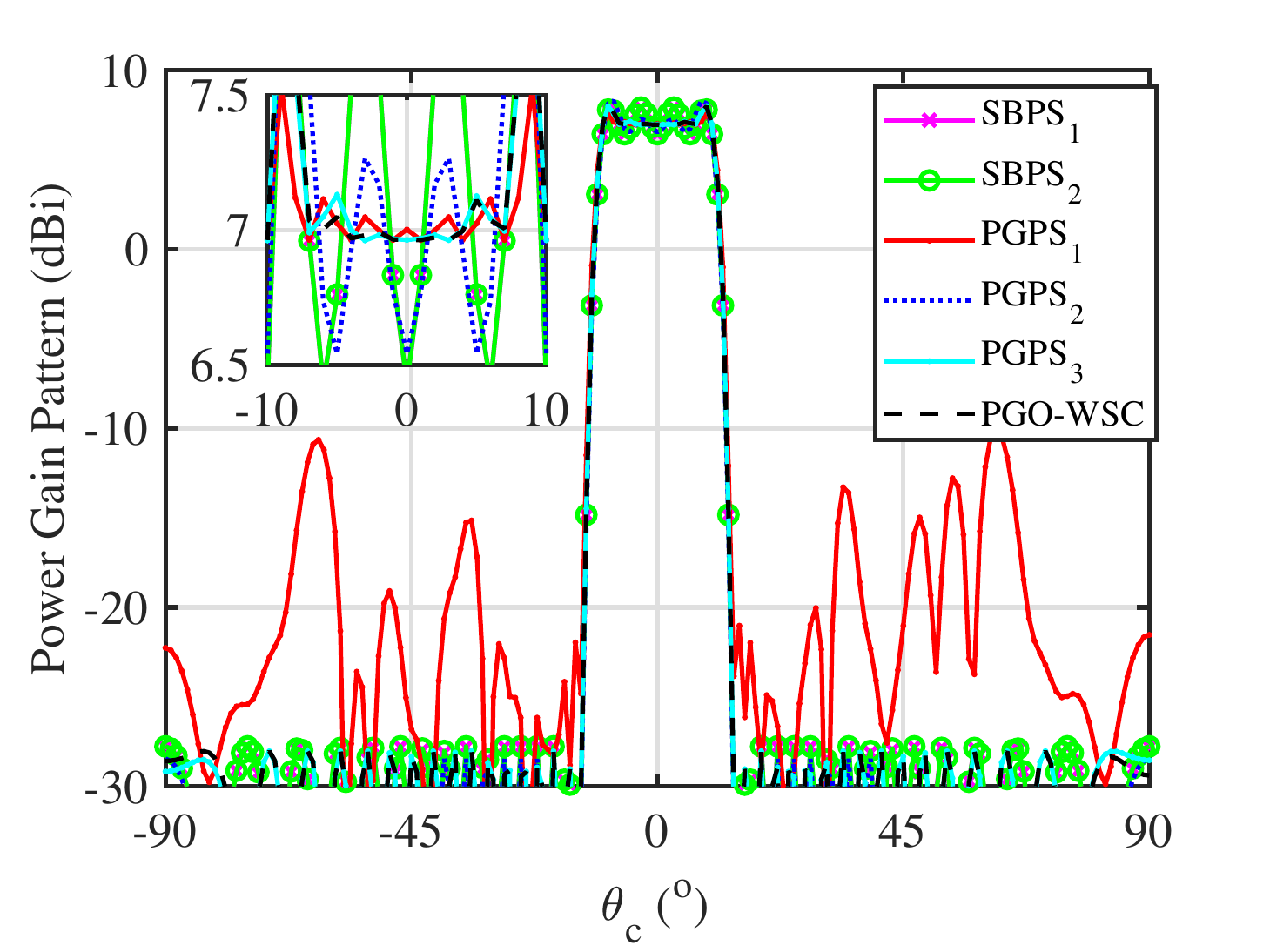}\label{fig:IEP_SLL35_wd20}}
\caption{The synthesized power gain patterns with different  dSLL (i.e. $\gamma$) constraints via $41$-element non-uniformly distributed linear IEP array antenna and $\theta_c=0^o$, $\boldsymbol{\Theta}_\text{bw}=20^o$: (a) $\gamma=-20$ dB, (b) $\gamma=-25$ dB, (c) $\gamma=-30$ dB, and (d) $\gamma=-35$ dB.}
\label{fig:IEP_SLL_wd}
\end{figure*}
\subsubsection{Example 2 (Scanning Without SLL Constraint via the IEP Array)}
To further validate the scanning ability, Fig. \ref{fig:IEP_WoSC_g0} shows the max-minimum power gain $G_0$ in the wide-beam mainlobe and Fig. \ref{fig:IEP_WoSC_PSLL} shows the peak SLL in the sidelobe region, when the wide-beam center $\theta_c$ scans from $0^o$ to $40^o$.
The top-left, top-right, down-left and down-right pictures in each subfigure represent the results for the beamwidth  $\boldsymbol{\Theta}_\text{bw}=10^o$, $20^o$, $30^o$ and $40^o$, respectively.
Statically, the proposed algorithm can obtain a better performance.
More precisely, the PGM-WoSC obtains higher or at least similar max-minimum power gain compared with the existing algorithms with a lower peak SLL in most directions.
Some other cases with different beamwidth and directions are also simulated and the results are similar as those from the shown figures and tables.

Generally, these two examples and other unshown examples demonstrate the effectiveness of the proposed PGM-WoSC algorithm in dealing with the minimum power gain maximization problem without the sidelobe constraint.
The results validate that the proposed algorithm performs better than the existing algorithms with faster computation speed.
Regrettably, the SLL obtained via these algorithms are uncontrollable while the algorithms including the proposed PGM-WSC and other existing algorithms can effectively control the SLL. This is discussed in the following part.

\subsection{Power Gain Optimization for Wide-Beam With Sidelobe Constraint}
To test the SLL control performance, the proposed PGM-WSC algorithm is compared with several existing algorithms:
\begin{itemize}
\item SBPS$_1$: the SBPS-based algorithm in \cite{Fuchs_2012} that tries to minimize the mainlobe ripple when a desired SLL is selected through solving following optimization problem:
    \begin{align}
    \begin{split}
    &~~~~\min_\w ~\epsilon\\
    &s.t.~~ |f_{\w}(\theta)-f_d(\theta)|^2\leq\epsilon,~\theta\in\boldsymbol{\Theta}_\text{ML}\\
    &~~~~~~|f_d(\theta)-f_{\w}(\theta)|^2\leq\rho,~\theta\in\boldsymbol{\Theta}_\text{SL}
    \end{split}
    \end{align}
    where $f_{\w}(\theta)$ is the beam pattern to be synthesized, $f_d(\theta)$ is the desired beam pattern. For instance, $f_d(\boldsymbol{\Theta}_\text{ML})=1$ and $f_d(\boldsymbol{\Theta}_\text{SL})=0$ for wide-beam mainlobe synthesis.
\item SBPS$_2$: the SBPS-based algorithm in \cite{FuchsSM_2013}. This algorithm tries to minimize the mainlobe ripple with an iterative strategy by optimizing the beam pattern: the maximum iteration is $Itr_{\max}=50$.
\item PGPS$_1$: The PGPS-based algorithm in \cite{LeiYHZCQ_2019} that tries to maximize the minimum power gain in mainlobe with a desired SLL via an iterative: $\alpha=0.2$ and $\beta=0.01$, the maximum iteration is $50$ and $gap=0.01$.approach
\item PGPS$_2$: The PGPS-based algorithm in \cite{LeiHCTPX_2020} that tries to maximize the minimum power gain in the wide-beam mainlobe with a desired SLL via an linear programming approach.
\item PGPS$_3$: The PGPS-based algorithm in \cite{ZhangJZ_2020e} that tries to maximize the minimum power gain in the wide-beam mainlobe via a two-stage iterative approach: the initial guess of $\x$ is obtained via PIE scheme with $c_0=1$, $c_1=0.9$, $c_2=0.9$, $\delta_{\max}=0.015$, the maximum iteration is $Itr_{\max}=100$ and $gap_0=10^{-4}$.
\end{itemize}
\begin{table*}
\caption{The max-minimum power gain ($G_0$) in different beamwidth ($\boldsymbol{\Theta}_\text{bw}$ and different desired SLL (dSLL, $\gamma$), Unit: \textit{dBi}.}\label{tab:tab_SLL_G0}
\begin{center}
\begin{tabular}{l| c c c c| c c c c| c c c c| c c c c}
\hline\hline
{$\boldsymbol{\Theta}_\text{bw}$}&\multicolumn{4}{c|}{$10^o$}
&\multicolumn{4}{c|}{$20^o$} &\multicolumn{4}{c|}{$30^o$} &\multicolumn{4}{c}{$40^o$}\\ \hline
{\diagbox{Alg.}{$\text{dSLL}$}} &$-20$ &$-25$  &$-30$ &$-35$  &$-20$ &$-25$  &$-30$ &$-35$ &$-20$ &$-25$  &$-30$ &$-35$ &$-20$ &$-25$  &$-30$ &$-35$ \\
\hline\hline
SBPS$_1$ &9.14 &9.32 &9.03 &8.70 &6.78 &6.76 &6.59 &6.43 &5.27 &5.19 &5.09 &4.62 &4.14 &4.10 &3.79 &3.41\\
SBPS$_2$ &9.14 &9.32 &9.03 &8.71 &6.78 &6.76 &6.59 &6.43 &5.27 &5.19 &5.09 &4.62 &4.14 &4.10 &3.79 &3.42\\
PGPS$_1$ &9.57 &9.50 &9.54 &9.50 &7.03 &7.00 &7.00 &7.00 &5.49 &5.48 &5.43 &5.43 &4.34 &4.34 &4.33 &4.32\\
PGPS$_2$ &9.41 &9.37 &8.97 &8.63 &6.91 &6.87 &6.69 &6.54 &5.37 &5.31 &5.24 &4.92 &4.26 &4.23 &4.12 &3.75\\
PGPS$_3$ &9.55 &9.42 &9.41 &9.36 &7.00 &6.97 &6.95 &6.94 &5.49 &5.48 &5.45 &5.44 &4.33 &4.34 &4.32 &4.29\\
PGM-WSC  &9.59 &9.41 &9.40 &9.29 &7.03 &7.01 &6.98 &6.93 &5.47 &5.45 &5.45 &5.36 &4.34 &4.33 &4.33 &4.19\\
\hline \hline
\end{tabular}
\end{center}
\end{table*}
\begin{table*}
\caption{The obtained SLL (oSLL) in different beamwidth ($\boldsymbol{\Theta}_\text{bw}$ and different desired SLL (dSLL, $\gamma$), Unit: \textit{dB}.}\label{tab:tab_SLL_oPSLL}
\begin{center}
\begin{tabular}{l| c c c c| c c c c| c c c c| c c c c}
\hline\hline
{$\boldsymbol{\Theta}_\text{bw}$}&\multicolumn{4}{c|}{$10^o$}
&\multicolumn{4}{c|}{$20^o$} &\multicolumn{4}{c|}{$30^o$} &\multicolumn{4}{c}{$40^o$}\\ \hline
{\diagbox{Alg.}{$\text{dSLL}$}} &$-20$ &$-25$  &$-30$ &$-35$  &$-20$ &$-25$  &$-30$ &$-35$ &$-20$ &$-25$  &$-30$ &$-35$ &$-20$ &$-25$  &$-30$ &$-35$ \\
\hline\hline
SBPS$_1$ &-19.9 &-24.8 &-29.3 &-33.9 &-19.9 &-24.7 &-29.4 &-34.2 &-19.9 &-24.7 &-29.5 &-34.0 &-19.8 &-24.7 &-29.4 &-34.0\\
SBPS$_2$ &-19.9 &-24.8 &-29.3 &-33.8 &-19.9 &-24.7 &-29.4 &-34.2 &-19.9 &-24.7 &-29.5 &-34.0 &-19.8 &-24.7 &-29.3 &-33.9\\
PGPS$_1$ &-20.0 &-24.5 &-21.5 &-17.4 &-20.0 &-23.9 &-18.7 &-25.1 &-20.0 &25.0 &-24.4 &-16.5 &-20.0 &24.8 &-18.9 &-22.0\\
PGPS$_2$ &-20.0 &-25.0 &-30.0 &-35.0 &-20.0 &-25.0 &-30.0 &-35.0 &-20.0 &-25.0 &-30.0 &-35.0 &-20.0 &-25.0 &-30.0 &-35.0\\
PGPS$_3$ &-20.0 &-25.0 &-30.0 &-35.0 &-20.0 &-25.0 &-30.0 &-35.0 &-20.0 &-25.0 &-30.0 &-35.0 &-20.0 &-25.0 &-30.0 &-35.0\\
PGM-WSC  &-20.0 &-25.0 &-30.0 &-35.0 &-20.0 &-25.0 &-30.0 &-35.0 &-20.0 &-25.0 &-30.0 &-35.0 &-20.0 &-25.0 &-30.0 &-35.0\\
\hline \hline
\end{tabular}
\end{center}
\end{table*}

This part would validate the superiority of the proposed PGM-WSC algorithm with both the IEP and active element pattern (AEP) via a $41$-element origin-symmetric half-wavelength non-uniformly distributed linear array antenna.
The positive array element positions are (normalized with $\lambda$): $[0.6215,$ $1.0414,$ $1.4743,$ $1.9572,$ $2.5043,$ $2.9870,$ $3.4492,$ $4.0155,$ $4.4617,$ $4.9544,$ $ 5.3895,$ $5.8762,$ $6.3107,$ $6.8955,$ $7.4536,$ $7.9957,$ $8.4196,$ $8.9315,$ $ 9.4274,$ $10 ]\lambda$.
Set the mainlobe beamwidth as $\boldsymbol{\Theta}_\text{bw}=10^o$, $20^o$, $30^0$ and $40^o$, the mainlobe region and the sidelobe region are defined accordingly by setting their nearest angular distance as $3^o$.  $\rho_1=\rho_2=2000$ with a decreasing factor $0.99$ is used for each iteration. The desired and obtained SLL are marked as dSLL and oSLL, respectively.

\subsubsection{Example 3 (Maximize the Minimum Power Gain With SLL Constraint via the IEP Array)}
Fig. \ref{fig:IEP_SLL_wd} plots the power gain pattern obtained via different algorithms by setting the dSLL $\gamma=-20$ dB, $-25$ dB, $-30$ dB and $-35$ dB, respectively.
The results show that the proposed PGM-WSC algorithm always performs the best.
For the latter four algorithms, they obtain similar max-minimum power gain in the wide-beam mainlobe.
However, the proposed algorithm can always obtain the dSLL, i.e., oSLL$=$dSLL, while the PGPS$_1$ cannot when $\gamma=-30$ dB and $\gamma=-35$ dB.
In addition, the proposed algorithm obtains about $0.2$ dB to $0.3$ dB lower peak SLL than the PGPS$_2$.
Though the PGPS$_3$ can obtain similar max-minimum power gain to the proposed algorithm, it is about $10$ times slower ($150.5$ seconds versus $14.5$ seconds).
Besides, the PGPS-based algorithms, i.e., PGPS$_1$, PGPS$_2$ and PGM-WSC, are always able to obtain higher $G_0$ than the SBPS-based algorithms, i.e.,  SBPS$_1$ and SBPS$_2$.
Hence, the proposed PGM-WSC algorithm is more robust and faster compared with the existing algorithms in dealing with different wide-beam power gain synthesis problems.

Figure \ref{fig:IEP_WSC_iter_u} shows the convergence speed of $\max\{|\P^H\x-\g|\}$ and $\max\{|\Q^H\x-\h|\}$ w.r.t. the iteration number.
The results show that both parameters are smaller than $10^{-4}$ after $600$ iterations.
Hence, it proves that the PGM-WSC algorithm converges to its suboptimal solution within hundreds of iterations.
Since the problem-solving process in every single iteration is pseudo-analytical, the proposed algorithm would run faster.
\begin{table}
\caption{The Running Time of the PGPS$_1$, the PGPS$_3$ and the proposed PGM-WSC algorithm with $\boldsymbol{\Theta}_{\text{ML}}=20^o$ (unit: second).}\label{tab:tab_run_time}
\begin{center}
\begin{tabular}{c | c| c| c}
\hline\hline
{\diagbox{$\theta_c$}{Alg.}} &PGPS$_1$ &PGPS$_3$  &PGM-WSC   \\
\hline\hline
$0^o$ &18.4 &332.2 &13.8 \\
$5^o$ &16.8 &425.1 &14.5 \\
$10^o$ &16.9 &392.3 &13.8 \\
$15^o$ &17.1 &295.7 &16.2 \\
$20^o$ &16.8 &235.9 &14.3 \\
$25^o$ &17.2 &451.6 &15.7 \\
$30^o$ &17.3 &689.1 &14.4 \\
$35^o$ &17.0 &262.9 &14.8 \\
$40^o$ &17.1 &293.6 &14.6 \\
\hline \hline
\end{tabular}
\end{center}
\end{table}
Some other cases include $\boldsymbol{\Theta}_\text{bw} = \{10^o, 30^o, 40^o\}$ are also simulated.
The obtained $G_0$ and oSLL are provided in Table \ref{tab:tab_SLL_G0} and \ref{tab:tab_SLL_oPSLL}, respectively.
The oSLL is defined as the gap between the minimum value in the mainlobe region and the maximum value in the sidelobe region.
Table \ref{tab:tab_SLL_G0} shows that PGM-WSC, PGPS$_1$ and PGPS$_3$ always obtain a higher $G_0$ than the other three algorithms.
Meanwhile, PGM-WSC and PGPS$_3$ can always obtain the dSLL, i.e., dSLL$=$oSLL, while PGPS$_1$ cannot, especially when the beamwidth $\boldsymbol{\Theta}_\text{bw}$ is large and the dSLL $\gamma$ is small.
For instance, when $\boldsymbol{\Theta}_\text{bw}=30^o, 40^o$ and $\gamma=-30, -35$ dB (i.e., dSLL$=-30, -35$ dB), the oSLL obtained via the PGPS$_1$ is about $10$ dB higher.
The proposed algorithm exhibit similar performance as PGPS$_3$. But, as discussed in Fig. \ref{fig:IEP_SLL_wd}, it is much faster than  PGPS$_3$.
Notice that oSLL via the SBPS-based algorithms, i.e., SBPS$_1$ and SBPS$_2$, is different from that provided in the related references.
This is caused by the definition difference of the oSLL.
The oSLL is defined as the difference between the middle value (the ripple is considered to vary around the middle value) in the mainlobe region and the maximum value in the sidelobe region in existing references, while it is defined as the difference between the minimum value in the mainlobe region and the maximum value in the sidelobe region.

\subsubsection{Example 4 (Scanning with SLL Constraint via the IEP Array)}
\begin{figure}[!t]
\centering
\subfigure[]{\includegraphics[width=2.4in]{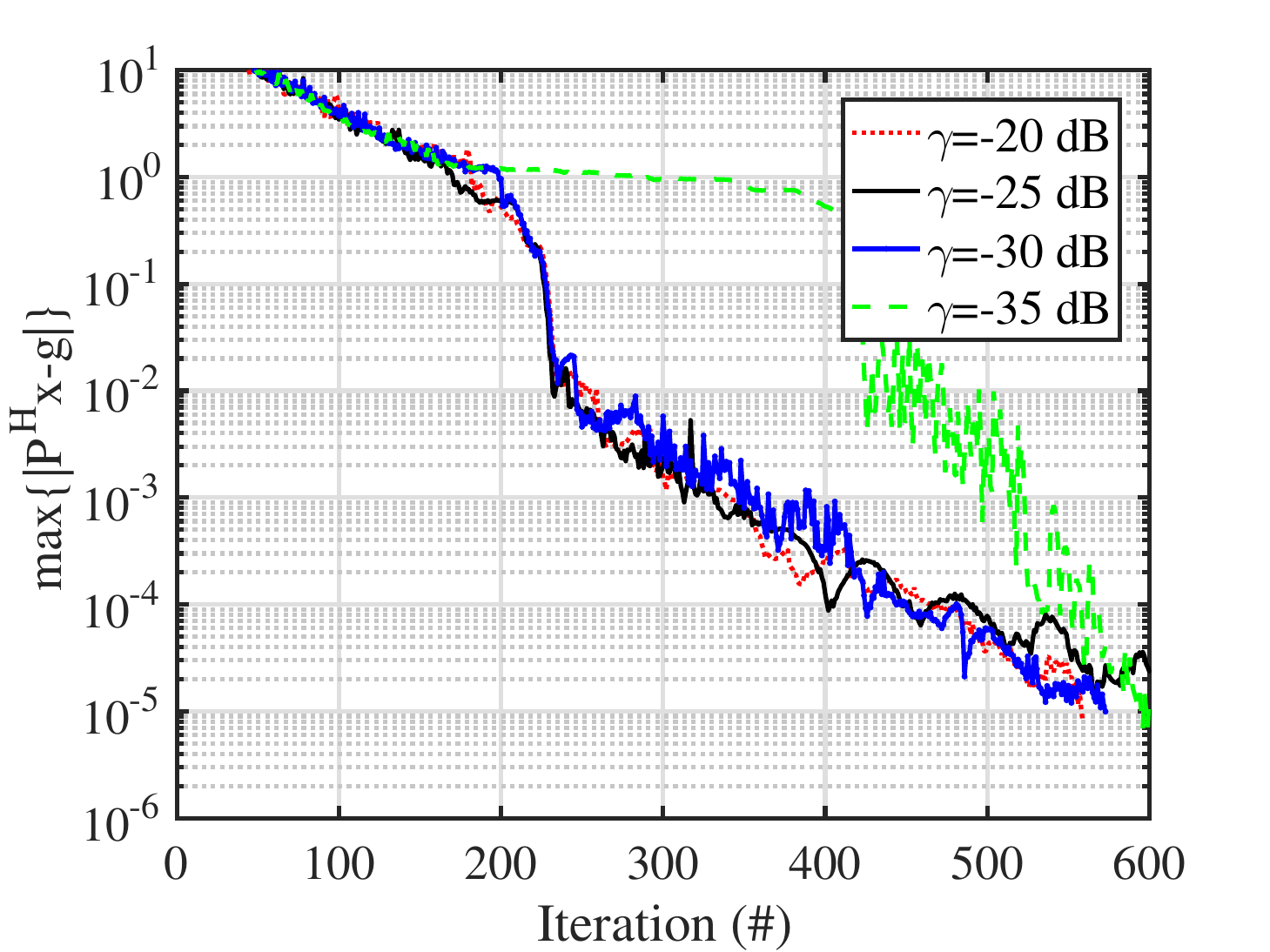}\label{fig:IEP_WSC_iter_u1}}
\subfigure[]{\includegraphics[width=2.4in]{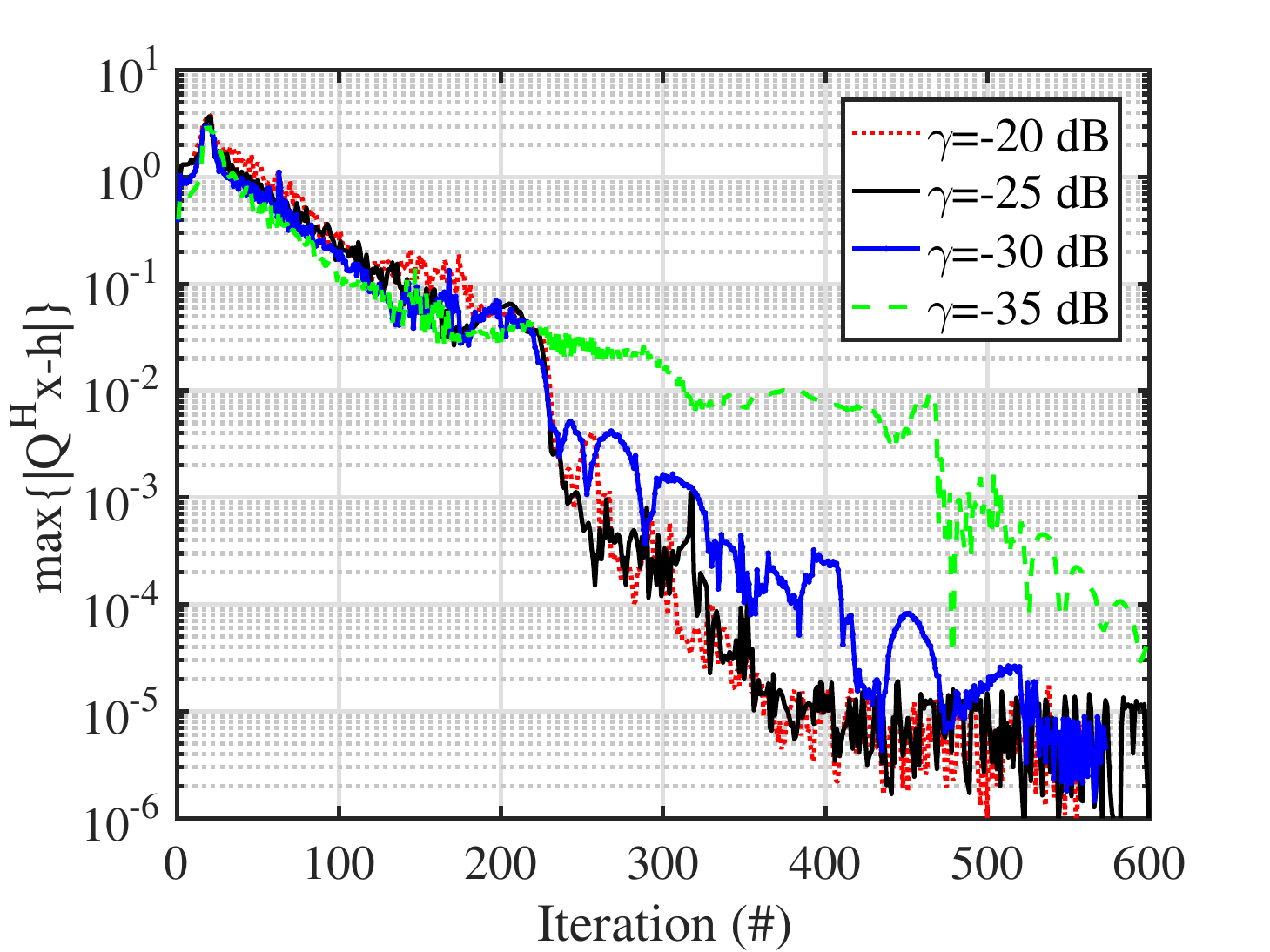}\label{fig:IEP_WSC_iter_u2}}
\caption{The convergence speed of $\max\{|\P^H\x-\g|\}$ and $\max\{|\Q^H\x-\h|\}$ for the proposed PGM-WSC algorithm w.r.t. the iteration number: (a) $\max\{|\P^H\x-\g|\}$ w.r.t the iteration number and (b) $\max\{|\Q^H\x-\h|\}$ w.r.t the iteration number.}
\label{fig:IEP_WSC_iter_u}
\end{figure}
\begin{figure}[!t]
\centering
\subfigure[]{\includegraphics[width=2.4in]{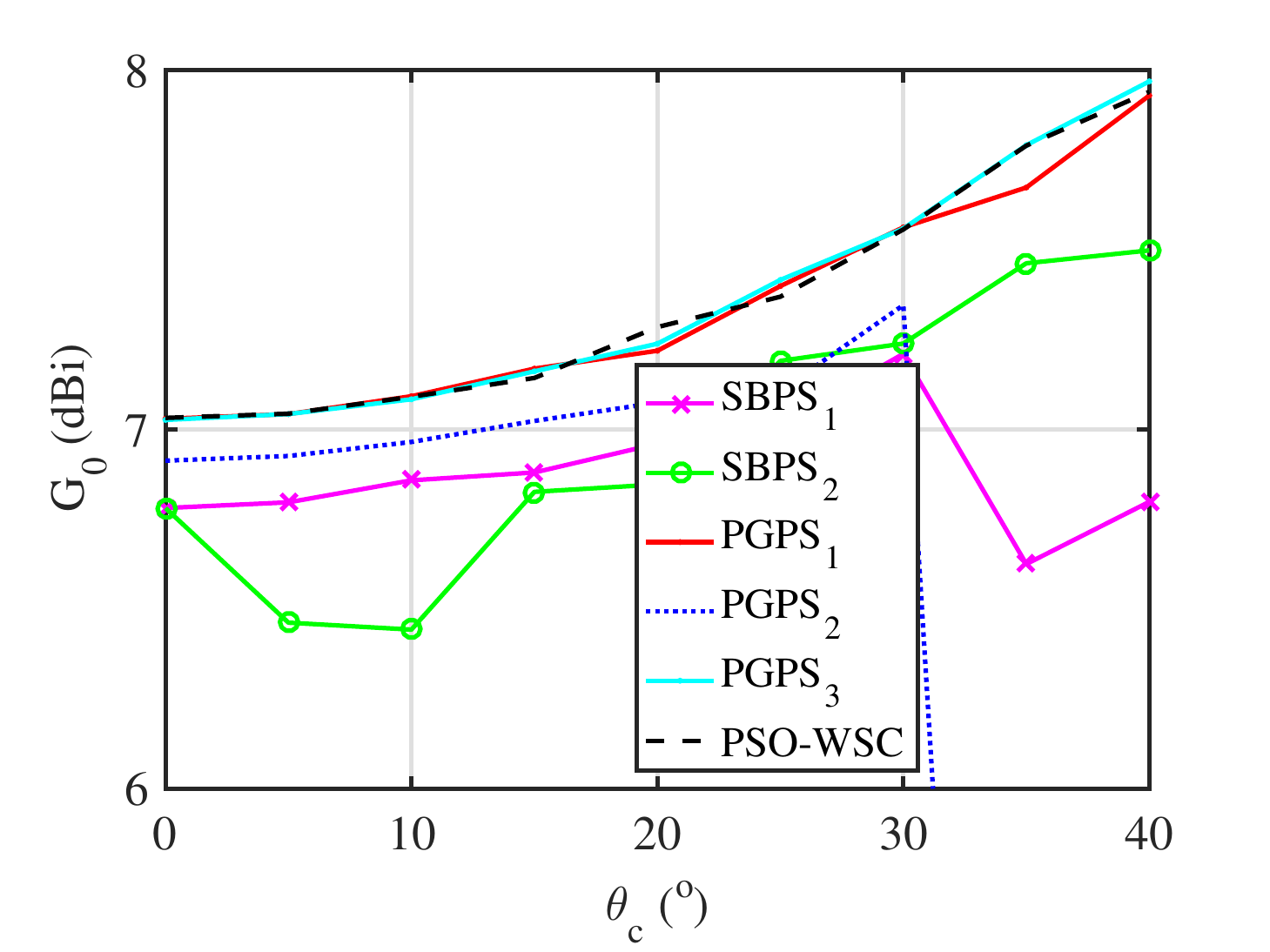}\label{fig:IEP_SLL20_wd10_Scanning_G0}}
\subfigure[]{\includegraphics[width=2.4in]{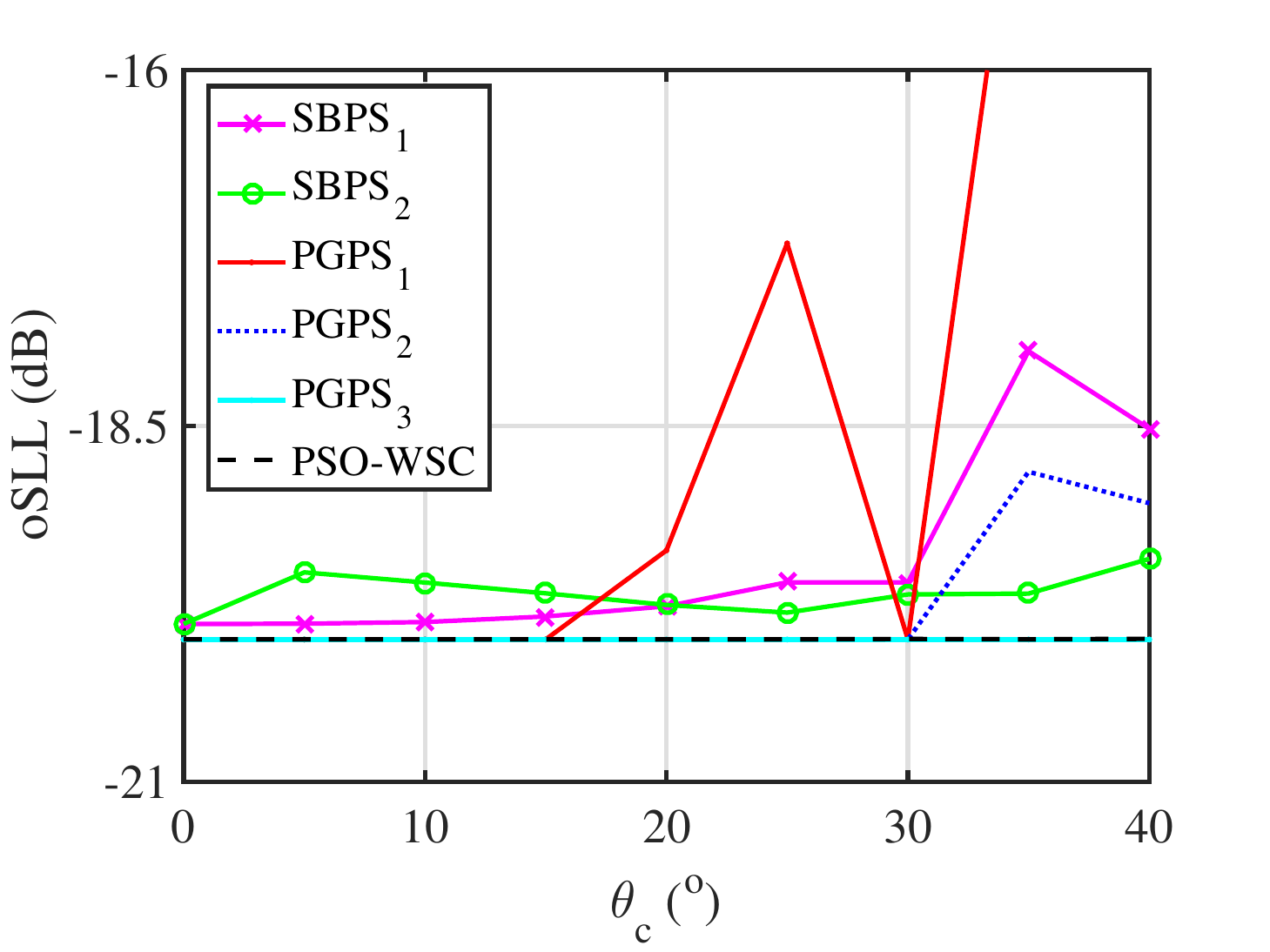}\label{fig:IEP_SLL20_wd10_Scanning_PSLL}}
\caption{The obtained $G_0$ and the obtained SLL for synthesizing a wide-beam via non-uniformly distributed IEP array with the dSLL$=-20$ dB and the beamwidth $\boldsymbol{\Theta}_\text{bw}=10^o$: (a)  the obtained $G_0$  and (b) the obtained SLL.}\label{fig:IEP_SLL20_wd10_Scanning}
\end{figure}

This part validates the scanning ability of the proposed PGM-WSC for synthesizing the wide-beam power gain pattern.
Fig. \ref{fig:IEP_SLL20_wd10_Scanning} plots the obtained max-minimum power gain, i.e., $G_0$ and the oSLL, by setting the dSLL$=-20$ dB and $\boldsymbol{\Theta}_\text{bw}=20^o$, when the beam center $\theta_c$ scans from $0^o$ to $40^o$.
It shows that the PGPS-based algorithms can always obtain a higher minimum mainlobe power gain $G_0$ than the SBPS-based algorithms.
PGPS$_1$,  PGPS$_3$ and the proposed algorithm have similar $G_0$s (which exhibit a fluctuation up to $\pm0.1$ dB for different beam directions) that are higher than $G_0$s achieved via other three algorithms.
Among these three algorithms, only PGM-WSC and PGPS$_3$ can obtain the desired SLL when the SLL is restricted (see Fig. \ref{fig:IEP_SLL20_wd10_Scanning_PSLL}).
In the sense of SLL control, PGM-WSC and PGPS$_3$ are better than PGPS$_1$.
Table \ref{tab:tab_run_time} shows the running times for the beam scanning.
It indicates that PGM-WSC and PGPS$_1$ are at least $10$ times faster than PGPS$_3$.
With all those comparisons, it can be concluded that PGM-WSC exhibits the best performance and a lower computational complexity.
In addition, other cases include dSLL$=\{-25$, $-30$, $-35\}$ dB,  $\boldsymbol{\Theta}_\text{bw}=\{30^o, 40^o\}$ are also simulated.
All results validate that the proposed PGM-WSC obtains the highest $G_0$ with the predetermined SLL and a lower computational burden.

\subsubsection{Example 5 (Performance Validation with AEP Array)}
\begin{table*}
\caption{The max-minimum power gain ($G_0$) and the obtained SLL (the oSLL) with different desired SLL (unit: \textit{dBi} for $G_0$ and \textit{dB} for oSLL).}\label{tab:tab_AEP_SLL_G0_PSLL}
\begin{center}
\begin{tabular}{l| c c c c| c c c c}
\hline\hline
{$\sharp$}&\multicolumn{4}{c|}{G$_0$}
&\multicolumn{4}{c}{oSLL}\\ \hline
{\diagbox{Alg.}{dSLL}} &$-20$ &$-25$  &$-30$ &$-35$  &$-20$ &$-25$  &$-30$ &$-35$   \\
\hline\hline
SBPS$_1$ &6.78 &6.79 &6.60 &6.43 &-13.1 &-18.0 &-22.9 &-27.8\\
SBPS$_2$ &6.78 &6.79 &6.60 &6.45 &-19.9 &-24.7 &-29.4 &-34.2\\
PGPS$_1$ &7.02 &7.02 &7.01 &7.01 &-20.0 &-25.0 &-26.3 &-25.3 \\
PGPS$_2$ &6.92 &6.90 &6.71 &6.54 &-20.0 &-25.0 &-30.0 &-35.0\\
PGPS$_3$ &7.03 &7.03 &7.01 &6.99 &-20.0 &-25.0 &-30.0 &-35.0\\
PGM-WSC  &7.03 &7.02 &7.01 &6.99 &-20.0 &-25.0 &-30.0 &-35.0 \\
\hline \hline
\end{tabular}
\end{center}
\end{table*}
\begin{figure}[!t]
\centering
\includegraphics[width=2.2in]{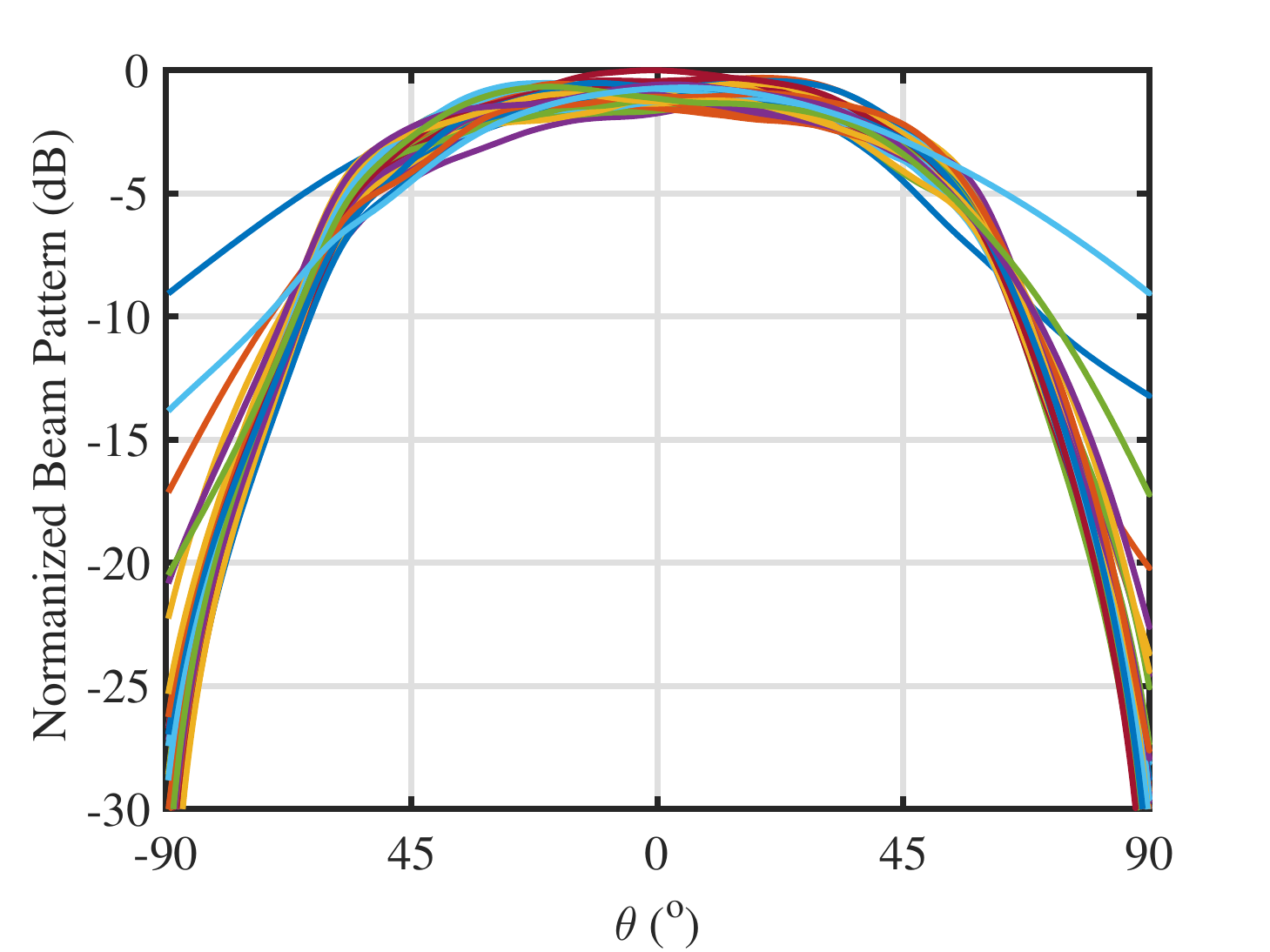}
\caption{The AEPs of the non-uniformly distributed array antenna via HFSS.} \label{fig:fig_AEPs}
\end{figure}
\begin{figure*}[!t]
\centering
\subfigure[]{\includegraphics[width=2.4in]{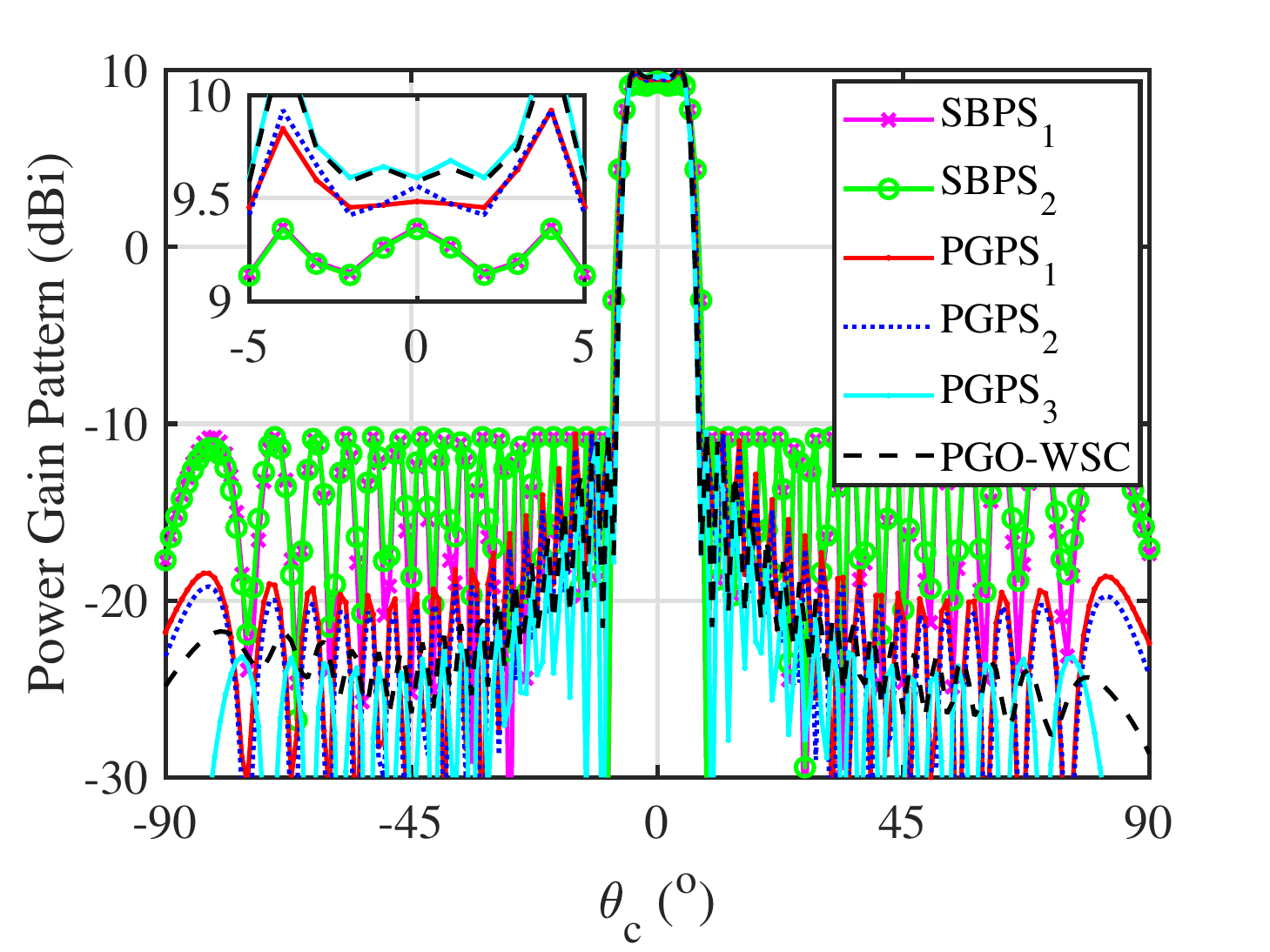}\label{fig:fig8_a}}
\subfigure[]{\includegraphics[width=2.4in]{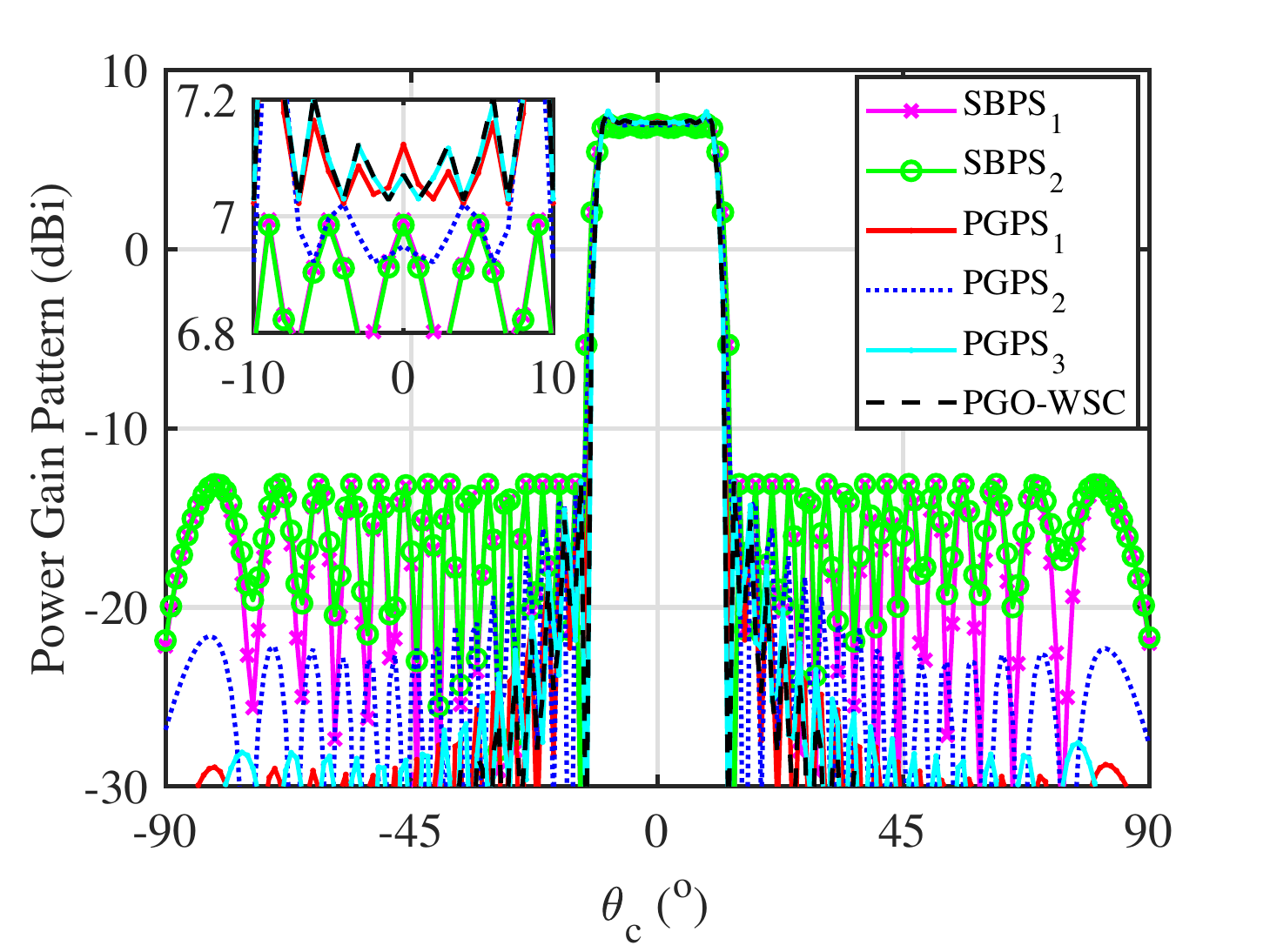}\label{fig:fig8_b}}
\subfigure[]{\includegraphics[width=2.4in]{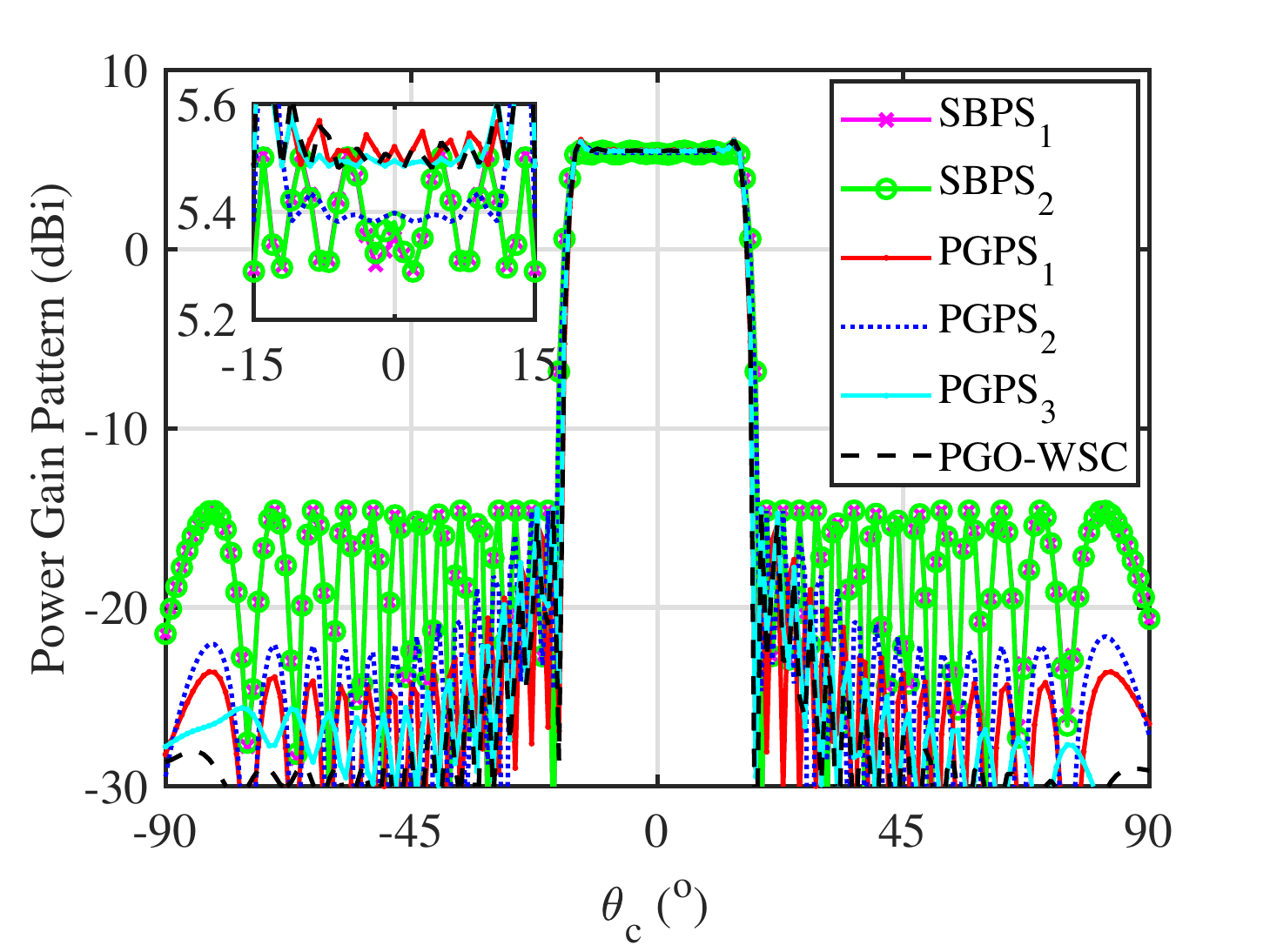}\label{fig:fig8_c}}
\subfigure[]{\includegraphics[width=2.4in]{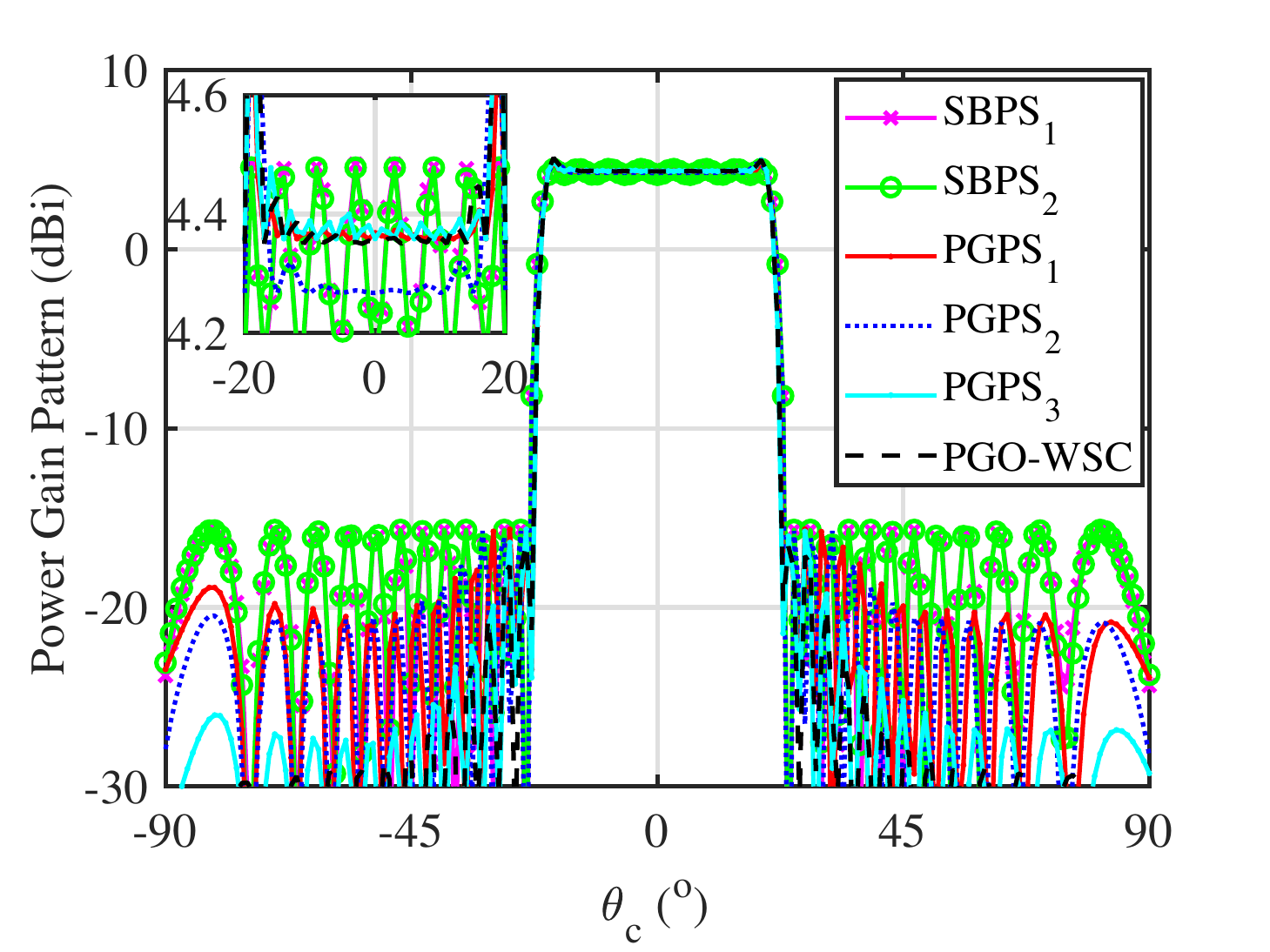}\label{fig:fig8_d}}
\caption{The synthesized power gain pattern via non-uniformly distributed AEP array with dSLL$=-20$ dB for the beamwidth: (a)  $\boldsymbol{\Theta}_\text{bw}=10^o$, (b) $\boldsymbol{\Theta}_\text{bw}=20^o$, (c) $\boldsymbol{\Theta}_\text{bw}=30^o$, and (d) $\boldsymbol{\Theta}_\text{bw}=40^o$.}
\label{fig:fig8}
\end{figure*}
With the limited distance among the elements, the mutual coupling effect always exists in practice.
To make the array synthesis more practical, the AEP is simulated as that presented in \cite{KelleyS_1993, LiuHXSYL_2017}.
The AEP is obtained by exciting one of the elements while the others are connected to matching load via High Frequency Electromagnetic Field Simulator (HFSS).
The AEPs of $41$ patch elements non-uniformly distributed linear array antenna (the elements are distributed as introduced in the previous section) is simulated and the normalized beam pattern is shown in Fig. \ref{fig:fig_AEPs}, which shows that the array elements are mainly radiating at the broadside direction with a $3$-dB beamwidth of approximately $\pm 45^o$.

Figure \ref{fig:fig8} shows the obtained power gain pattern with $\boldsymbol{\Theta}_\text{bw}=10^o$, $20^o$, $30^o$ and $40^o$ via different algorithms.
The results demonstrate that PGM-WSC obtains the highest max-minimum power gain $G_0$ in the wide-beam mainlobe as the PGPS$_3$ with the desired dSLL.
Although PGPS$_3$ obtains a similar $G_0$, it is much slower than the proposed algorithm.
Table \ref{tab:tab_AEP_SLL_G0_PSLL} shows the obtained $G_0$ and the oSLL for $\boldsymbol{\Theta}_\text{bw}=20^o$ when dSLL$=\{-20, -25, -30, -35\}$ dB.
The results also indicate that the PGM-WSC algorithm obtains higher $G_0$ to keep SLL below a desired level.
The SBPS-based algorithms perform the worst and PGPS$_1$ can obtain a similar $G_0$ as PGM-WSC does. However, PGPS$_1$ cannot obtain the predetermined SLL especially when the desired SLL is low, i.e., dSLL$=-30$ and $-35$ dB.
The other scenarios for different $\boldsymbol{\Theta}_\text{bw}$, dSLL and $\theta_\text{c}$ are also simulated with the AEP array.
Generally, all simulation results prove that the PGM-WSC obtains higher $G_0$ than other algorithms to obtain the desired SLL with a fast computation speed.
Those simulations effectively validate the superiority of the proposed PGM-WSC algorithm in maximizing the minimum power gain in wide-beam mainlobe under practical application scenarios.
\subsection{Discussions}
In the previous two parts, the performance of PGM-WoSC and PGM-WSC algorithms are validated via IEP and AEP arrays.
All simulation results prove that PGM-WoSC obtains as high $G_0$ as the PGPS-based algorithms presented in \cite{LeiYHZCQ_2019} and \cite{ZhangJZ_2020e} with guaranteed convergence. In addition, it obtains the highest $G_0$ via the PGM-WSC with the predetermined SLL and faster computational speed.
The remarks can be drawn as follows:
\begin{itemize}
\item Without the sidelobe constraint, the proposed PGM-WoSC algorithm performs as good as the existing PGPS-based algorithms, with the advantage of lower SLL in most cases.
\item With the sidelobe constraint, the proposed PGM-WSC algorithm together with other three PGPS-based algorithms presented in \cite{LeiYHZCQ_2019, LeiHCTPX_2020, ZhangJZ_2020e} always outperform the existing SBPS-based algorithms when synthesizing the wide-beam flat power gain pattern.
    Among the PGPS-based algorithms, PGM-WSC, PGPS$_1$ and PGPS$_3$ perform similarly, i.e.,  they can obtain similar $G_0$ with a difference less than $0.1$ dB.
    However, compared with PGPS$_1$, PGM-WSC can always obtain the predetermined SLL. Compared with PGPS$_3$, PGM-WSC is able to run faster.
\item Generally, the proposed two algorithms both perform better than or at least as good as the existing algorithms.
    With the aid of the ADMM framework, the power gain maximization problem can be solved faster by solving a series of subproblems through iterative processes.
    Importantly, the subproblems in each iteration is solved pseudo-analytically, which guarantees the theoretical convergence and fast computation properties.
\end{itemize}

\section{Conclusions}
This paper proposes two algorithms for solving the nonconvex power gain maximization problem without and with the SLL constraints.
The proposed algorithms reformulated the PGPS problem into an augmented Lagrangian form to simplify the original problem by updating the variables through solving a series of subproblems iteratively.
The intractable subproblems are simplified by projecting their solutions into a finite set of smaller solution spaces, resulting in pseudo-analytical solving processes.
Moreover, the convergence of the proposed algorithms is theoretically guaranteed, which ensures that the proposed algorithm can obtain higher power gain with desired SLL.
Besides, with the ADMM framework, the proposed algorithms is much faster than the other existing PGPS-based algorithms.
Numerical examples with both IEP and AEP arrays are also carried out.
The results show that the proposed algorithms can always obtain higher power gain than the existing SBPS-based algorithms, and can obtain the desired SLL with a faster speed compared with the existing PGPS-based algorithms.

\begin{appendices}
\section{Solving Variable $\nu$ in (\textsl{\ref{eq:nu_value})}}\label{app:a}

Substituting (\ref{eq:allpha}) into (\ref{eq:f3_alpha_beta}), it obtains
\begin{align} \label{eq:lag_mul_eq}
\begin{split}
f_\nu(\bbeta) &= \bbeta^T\left(\LLambda-\nu\I\right)^{-1}\LLambda\left(\LLambda-\nu\I\right)^{-1}\bbeta\\
&~~-2\bbeta^T\left(\LLambda-\nu\I\right)^{-1}\bbeta + \tilde{\d}^T\tilde{\d}\\
&=\sum_{n=1}^N\frac{\beta_n^2(2\nu-\lambda_n)}{(\lambda_n-\nu)^2}+ \tilde{\d}^T\tilde{\d}
\end{split}
\end{align}

\begin{proposition}\label{proposition_1}
 Let $\nu_1$ and $\nu_2$ be two roots of the Lagrange multiplier equation (\ref{eq:nu_value}).
 Their corresponding cost functions in (\ref{eq:lag_mul_eq}) are $f_{\nu_1}(\bbeta)$ and $f_{\nu_2}(\bbeta)$.
 If $\nu_1\geq\nu_2$, $f_{\nu_1}(\bbeta)\geq f_{\nu_2}(\bbeta)$.
\end{proposition}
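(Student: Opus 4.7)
The plan is to reduce the cost function at a constraint root to a simple scalar expression in $\nu$, then show that the resulting increment in $\nu$ dominates a Cauchy--Schwarz-controlled perturbation term.

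First I would introduce the two auxiliary functions
\begin{align}
g(\nu)=\sum_{n=1}^{N}\frac{\beta_n^{2}}{(\lambda_n-\nu)^{2}},\qquad
h(\nu)=\sum_{n=1}^{N}\frac{\beta_n^{2}}{\lambda_n-\nu},
\end{align}
and rewrite the numerator of each summand in $f_\nu(\bbeta)$ via the identity $2\nu-\lambda_n = \nu-(\lambda_n-\nu)$. This yields
\begin{align}
f_\nu(\bbeta)\;=\;\nu\,g(\nu)-h(\nu)+\tilde{\d}^{T}\tilde{\d}.
\end{align}
Because $\nu_1$ and $\nu_2$ are both roots of the constraint equation \eqref{eq:nu_value}, namely $g(\nu_1)=g(\nu_2)=1$, this collapses to $f_{\nu_j}(\bbeta)=\nu_j-h(\nu_j)+\tilde{\d}^{T}\tilde{\d}$ for $j=1,2$.

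Next I would take the difference and use the partial-fraction identity
$\frac{1}{\lambda_n-\nu_1}-\frac{1}{\lambda_n-\nu_2}=\frac{\nu_1-\nu_2}{(\lambda_n-\nu_1)(\lambda_n-\nu_2)}$ to obtain
\begin{align}
f_{\nu_1}(\bbeta)-f_{\nu_2}(\bbeta)\;=\;(\nu_1-\nu_2)\!\left[\,1-\sum_{n=1}^{N}\frac{\beta_n^{2}}{(\lambda_n-\nu_1)(\lambda_n-\nu_2)}\right].
\end{align}
The core step, and the part I expect to require the most care, is bounding the bracketed sum. Applying the Cauchy--Schwarz inequality to the vectors with entries $\beta_n/(\lambda_n-\nu_1)$ and $\beta_n/(\lambda_n-\nu_2)$,
\begin{align}
\left|\,\sum_{n=1}^{N}\frac{\beta_n^{2}}{(\lambda_n-\nu_1)(\lambda_n-\nu_2)}\right|
\;\leq\;\sqrt{g(\nu_1)}\,\sqrt{g(\nu_2)}\;=\;1,
\end{align}
so the bracket is nonnegative. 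Combined with the hypothesis $\nu_1\geq\nu_2$, this delivers $f_{\nu_1}(\bbeta)\geq f_{\nu_2}(\bbeta)$ and finishes the argument.

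The only subtlety worth flagging is that the factors $\lambda_n-\nu_j$ may change sign across $n$, so one cannot assert termwise positivity of the inner sum; that is precisely why Cauchy--Schwarz is invoked on signed quantities via its absolute-value form rather than through a termwise estimate. Apart from that, the manipulation is elementary and purely algebraic, and no further assumptions on the spectrum of $\tilde{\P}\tilde{\P}^{T}$ are needed.
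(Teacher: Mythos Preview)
Your proof is correct and follows essentially the same route as the paper: reduce $f_{\nu_j}(\bbeta)$ to $\nu_j - h(\nu_j) + \tilde{\d}^{T}\tilde{\d}$ using $g(\nu_j)=1$, factor the difference as $(\nu_1-\nu_2)\bigl[1-\sum_n \beta_n^{2}/((\lambda_n-\nu_1)(\lambda_n-\nu_2))\bigr]$, and bound the bracketed sum by~$1$. The only cosmetic difference is the last step: the paper applies the termwise inequality $a_nb_n\le\tfrac12(a_n^{2}+b_n^{2})$ and sums, whereas you invoke Cauchy--Schwarz on the full sum; since $g(\nu_1)=g(\nu_2)=1$, both give exactly the same bound.
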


\begin{proof}
Since $\nu_1$ and $\nu_2$ are the solutions of (\ref{eq:nu_value}), the following expressions of $f_{\nu_1}(\x)$ and $\f_{\nu_2}(\x)$ can be obtained as
\begin{align}
\begin{split}
f_{\nu_1}(\bbeta)&=\sum_{n=1}^N\frac{\beta_n^2}{\nu_1-\lambda_n}+\nu_1+\tilde{\d}^T\tilde{\d}\\
f_{\nu_3}(\bbeta)&=\sum_{n=1}^N\frac{\beta_n^2}{\nu_2-\lambda_n}+\nu_2+\tilde{\d}^T\tilde{\d}
\end{split}
\end{align}

Hence, there is
\begin{align}
\begin{split}
&f_{\nu_1}(\bbeta)-f_{\nu_2}(\bbeta)\\
&~~=\nu_1-\nu_2+\sum_{n=1}^N\left(\frac{\beta_n^2}{\nu_1-\lambda_n}-\frac{\beta_n^2}{\nu_2-\lambda_n}\right)\\
&~~=(\nu_1-\nu_2)\left(1-\sum_{n=1}^N\frac{\beta_2^2}{(\nu_1-\lambda_n)(\nu_2-\lambda_n)}\right)\\
&~~\geq(\nu_1-\nu_2)\left(1-\sum_{n=1}^N\frac{\beta_n^2}{2(\nu_1-\lambda_n)^2}\right.\\
&~~~~~~~~~~~~~~~~~~~~\left.-\sum_{n=1}^N\frac{\beta_n^2}{2(\nu_2-\lambda_n)^2}\right)\\
&~~=0
\end{split}
\end{align}

Hence, it yields $f_{\nu_1}(\bbeta)\geq f_{\nu_2}(\bbeta)$.
This completes the proof.
\end{proof}

With the conclusion of Proposition \ref{proposition_1}, the minimal value of $f_3$ in (\ref{eq:lag_mul_eq}) would be obtained with the minimum $\nu$, i.e. $\nu_{\min}$.

From the above euqation, there must be
\begin{align}
\begin{split}
\begin{cases}
&\max{\frac{|\beta_n|}{|\nu_{\min}-\lambda_n|}}\leq 1\\
&\max{\frac{|\beta_n|}{|\nu_{\min}-\lambda_n|}}\geq\frac{1}{\sqrt{N}}\\
&\min{\frac{|\beta_n|}{|\nu_{\min}-\lambda_n|}}\leq\frac{1}{\sqrt{N}}
\end{cases}, ~~n = 1,\cdots,N
\end{split},
\end{align}
which would yield
\begin{align} \label{eq:min_nu_domain}
\begin{split}
&\nu_{\min}\in\left[\min\left\{\lambda_n-\sqrt{N}|\beta_n|\right\}, \right.\\
&~~~\left.\min\left\{\min\left\{\lambda_n-|\beta_n|\right\},\max\left\{\lambda_n-\sqrt{N}|\beta_n|\right\}\right\}\right]
\end{split}
\end{align}

Let $f_1(\nu) = \sum_{n=1}^N\frac{\beta_n^2}{(\nu-\lambda_n)^2}-1$, the partial derivative of $f_1(\nu_{\min})=-\sum_{n=1}^N\frac{\beta_n^2}{(\nu-\lambda_n)^3}\geq0$.
It proves that function $f_1(\nu)$ is monotonically increasing in domain of (\ref{eq:min_nu_domain}).
Hence the the minimal solution, i.e., $\nu_{\min}$ of (\ref{eq:nu_value}) can be obtained with bisection searching method.

\end{appendices}





\ifCLASSOPTIONcaptionsoff
  \newpage
\fi

\bibliographystyle{IEEEbib}
\small
\bibliography{referencesAll}
\end{document}